\newtheorem{theorem}{Theorem}
\newtheorem{lemma}{Lemma}
\newtheorem{proposition}{Proposition}
\newtheorem{observation}{Observation}
\newtheorem{definition}{Definition}
\newtheorem{claim}{Claim}
\newcommand{\tikznode}[2]{%
\ifmmode%
\tikz[remember picture,baseline=(#1.base),inner sep=0pt] \node (#1) {$#2$};%
\else
\tikz[remember picture,baseline=(#1.base),inner sep=0pt] \node (#1) {#2};%
\fi}
\newcommand{\StrikeColumn}[3][]{%
  \begin{tikzpicture}[overlay,remember picture]
    \draw[#1] (#2.north) -- (#3.south);
  \end{tikzpicture}
}
\newcommand{\StrikeRow}[3][]{%
  \begin{tikzpicture}[overlay,remember picture]
    \draw[#1] (#2.west) -- (#3.east);
  \end{tikzpicture}
}
\DeclareMathOperator{\supp}{supp}
\renewcommand{\vec}[1]{\mathbf{#1}}
\title{Optimal algebraic Breadth-First Search for sparse graphs}
\author{Paul Burkhardt \footnote{Research Directorate, National Security Agency,
Fort~Meade, MD 20755. Email: pburkha@nsa.gov}}
\begin{document}

\onecolumn
\maketitle

\begin{abstract}
  There has been a rise in the popularity of algebraic methods for graph
  algorithms given the development of the GraphBLAS library and other sparse
  matrix methods. An exemplar for these approaches is Breadth-First Search
  (BFS). The algebraic BFS algorithm is simply a recurrence of matrix-vector
  multiplications with the $n \times n$ adjacency matrix, but the many redundant
  operations over nonzeros ultimately lead to suboptimal performance. Therefore
  an optimal algebraic BFS should be of keen interest especially if it is easily
  integrated with existing matrix methods.

  Current methods, notably in the GraphBLAS, use a Sparse Matrix masked-Sparse
  Vector (SpMmSpV) multiplication in which the input vector is kept in a sparse
  representation in each step of the BFS, and nonzeros in the vector are masked
  in subsequent steps. This has been an area of recent research in GraphBLAS and
  other libraries. While in theory these masking methods are asymptotically
  optimal on sparse graphs, many add work that leads to suboptimal runtime. We
  give a new optimal, algebraic BFS for sparse graphs, thus closing a gap in the
  literature.

  Our method multiplies progressively smaller submatrices of the adjacency
  matrix at each step. Let $n$ and $m$ refer to the number of vertices and
  edges, respectively. On a sparse graph, our method takes $O(n)$ algebraic
  operations as opposed to $O(m)$ operations needed by theoretically optimal
  sparse matrix approaches. Thus for sparse graphs it matches the bounds of the
  best-known sequential algorithm and on a Parallel Random Access Machine (PRAM)
  it is work-optimal. Our result holds for both directed and undirected
  graphs. Compared to a leading GraphBLAS library our method achieves up to 24x
  faster sequential time and for parallel computation it can be 17x faster on
  large graphs and 12x faster on large-diameter graphs.

  \textit{Keywords}: breadth-first search, graph algorithm, sparse matrix,
  linear algebra
\end{abstract}

\section{\label{sec:introduction}Introduction}
\emph{Breadth-First Search} (BFS) is a principal search algorithm and
fundamental primitive for many graph algorithms such as computing reachability
and shortest paths. Let $n$ and $m$ refer to the number of vertices and edges,
respectively. By labeling vertices $1..n$, a symmetric $n\times n$ adjacency
matrix, $A$, can be constructed so that every nonzero element of the matrix
denotes an edge leading to $O(m)$ nonzeros in total. Hence each column or row
vector of this matrix describes the adjacency or neighborhood of a vertex. The
linear algebraic BFS algorithm is simply a recurrence of matrix-vector
multiplications with this adjacency matrix and the previous multiplication
product. It solves the $\vec{x}_{k+1} = A\vec{x}_k$ relation, where each
matrix-vector product captures the next level in the search. The computation is
optimal if it makes $O(m)$ algebraic operations overall. Observe that this
recurrence can be iterated to give $\vec{x}_{n+1}=A^n\vec{x}_1$ and therefore
matrix exponentiation of $A$ by repeated squaring\footnote{Ex. $x^{17}=x\times
  x^{16} = x\times x^{4^{^4}} = x\times x^{2^{2^{2^{2}}}}$} leads to a
sublinear-time, parallel BFS, but requires $\Omega(n^3 \log n)$ work.

For sparse graphs, computing the algebraic BFS by $\vec{x}_{k+1} = A\vec{x}_k$
is appealing due in part to the availability of highly optimized matrix
libraries that are finely tuned to the computer architectures. These libraries
take advantage of the memory subsystem and it is this low-level interaction with
hardware that enables the algebraic BFS to be faster in practice than the
theoretically optimal combinatorial algorithm. Newer approaches employ a Sparse
Matrix masked-Sparse Vector (SpMmSpV) multiplication~\cite{bib:yang2018,
  bib:azad_buluc2017, bib:yang2015, bib:buluc_madduri2011}. In these methods,
previously visited frontier vertices are masked out of the sparse input vector
at each step. This can lead to an optimal BFS on sparse graphs, but makes more
algebraic operations than needed. Moreover, many of the new SpMmSpV-BFS methods
add work that degrade the performance to $O(mn)$ time.

We give an algebraic BFS that is optimal for sparse graphs. Our method
multiplies progressively smaller submatrices of the adjacency matrix at each
step. It masks both row and column vectors in $A$ so it avoids returning
nonzeros from the transpose of a column vector in subsequent calculations. Only
the column vectors are needed to project the next frontier so the row vectors
for the current frontier vertices are also masked during the multiplication on
that frontier. This exploits the symmetry of $A$ and thus requires at most $m$
nonzeros. Our new BFS also short-circuits each matrix-vector multiplication by
updating the mask within a step and thus references $n-1$ nonzeros in
total. This holds for both undirected and directed graphs. In contrast, an
optimal SpMmSpV-BFS, even with short-circuiting, references all $2m$ nonzeros in
the worst-case on an undirected graph. Therefore our approach requires
significantly fewer algebraic operations than a theoretically optimal
SpMmSpV-BFS method. Our submatrix multiplication method is quite simple so it is
surprising that it has been overlooked~\cite{bib:conf2015}. Our new method can
be easily integrated with existing matrix methods, and may benefit the masking
techniques in the GraphBLAS library~\cite{bib:davis2019, bib:yang2019,
  bib:yang2018, bib:buluc_graphblas2017}, so we expect it would provide
substantial value in practical settings.

Although our technique is applicable to dense graphs, it only leads to optimal
performance on sparse graphs. For the remainder of this paper we will only
consider sparse graphs. We summarize our contribution in
Section~\ref{sec:contribution}. A brief background on sparse matrix methods for
BFS is given in Section~\ref{sec:background}. We review the current SpMmSpV
approaches in Section~\ref{sec:related}. In Sections~\ref{sec:submatrix}
and~\ref{sec:bounds} we define a new algebraic BFS by submatrix multiplication
and analyze the asymptotic bounds on operations. Then in
Section~\ref{sec:algorithm} we describe our main algorithm and show that its
performance matches the combinatorial algorithm, and on a Parallel Random Access
Machine (PRAM) it is work-optimal. Then in Sections~\ref{sec:sequential}
and~\ref{sec:parallel} we use the popular Compressed Sparse Row (CSR) format to
demonstrate the theoretical contribution and how easily it can be integrated
with existing sparse matrix methods. Experimental results of this are given in
Section~\ref{sec:experiment} where we demonstrate faster performance than the
leading GraphBLAS library, achieving up to 24x faster sequential time and for
parallel computation it can be 17x faster on large graphs and 12x faster on
large-diameter graphs.

\section{\label{sec:notation}Notation}
All following descriptions are for simple, undirected graphs denoted by
$G=(V,E)$ with $n=\lvert V \rvert$ vertices and $m=\lvert E \rvert$ edges. The
number of neighbors of a vertex is given by its degree $d(v)=\lvert \{u | (u,v)
\in E \} \rvert$. Let $D_G$ denote the diameter of $G$, meaning the maximum
distance between any pair of vertices in $G$. The vertices in $G$ are labeled
$[n]=1,2,3...n$. Let $A \in \{0,1\}^{n \times n}$ be the adjacency matrix for
$G$. We use $A[\cdot, \cdot]$ to denote the submatrix of $A$ by its rows and
columns. For example, $A[\{1,2,4\},\{3,4\}]$ is the submatrix given by rows
$1,2,4$ and columns $3,4$ in $A$. We also use $A_{i,*},A_{*,i}$ for the
$i^{\text{th}}$ row and column vectors of $A$, respectively. The support of a
vector $\vec{x}$, denoted by $\supp(\vec{x})$, refers to the set of indices
corresponding to nonzero entries in $\vec{x}$, thus $\supp(\vec{x}) = \{i | x(i)
\ne 0\}$.

Correctness in the algebraic BFS requires only the distinction between zero and
nonzero values and this also holds in our method. In addition to the Arithmetic
semiring it is safe to use the Boolean (OR for addition, AND for multiplication)
or Tropical min-plus semiring, both of which also avoid bit-complexity
concerns. When appropriate, we'll denote the addition and multiplication
operators by the conventional symbols, $\oplus$ and $\otimes$, respectively. To
keep our discussion simple, we'll assume that any algebraic operation takes
$O(1)$ time.

\section{\label{sec:contribution}Our contribution}
We give a new algebraic BFS by submatrix multiplication that takes fewer
algebraic operations than a theoretically optimal SpMmSpV-BFS method. We denote
our algebraic BFS by the recurrence $\vec{x}_{k+1}=A[V_{k+1},V_k]\vec{x}_k$
where $A[V_{k+1},V_k]$ is the submatrix of $A$ with row and column indices given
by the set $V_k$ containing vertices not yet found in the search as of step
$k$. The vector $\vec{x}_k$ is also masked by the indices in $V_k$ so in all
steps the matrix and vector are compatible. We will show that our BFS takes
$O(n)$ algebraic operations as opposed to $O(m)$ operations of an optimal
SpMmSpV-BFS. Our results hold for both undirected and directed graphs, only
$A^T[V_{k+1},V_k]$ is used for directed graphs. Our main theoretical results are
given by the following theorems.

\newtheorem*{thm:abfs}{Theorem \ref{thm:abfs}}
\begin{thm:abfs}
  Breadth-First Search can be computed by
  $\vec{x}_{k+1}=A[V_{k+1},V_k]\vec{x}_k$ for $k=1,\ldots,O(n)$ steps.
\end{thm:abfs}

\newtheorem*{thm:abfs_ops}{Theorem \ref{thm:abfs_ops}}
\begin{thm:abfs_ops}
  Computing Breadth-First Search by $\vec{x}_{k+1}=A[V_{k+1},V_k]\vec{x}_k$
  using one nonzero in each row will multiply $n-1$ nonzeros in $A$.
\end{thm:abfs_ops}

We introduce a new algebraic BFS in Algorithm~\ref{alg:abfs_spmv} based on
Theorems~\ref{thm:abfs} and~\ref{thm:abfs_ops}. The adjacency matrix remains
unchanged, rather we are masking the rows and columns in the matrix that
corresponds to previously visited vertices. The input vector in each step is
also effectively masked so it is a sparse vector. Hence our method multiplies a
sparse submatrix by a sparse subvector in decreasing size each step. This leads
to an asymptotic speedup over the conventional algebraic method for both
sequential and parallel computation. Our algorithm is optimal on sparse graphs
and work-optimal on a PRAM. Our main algorithmic results are the following.

\newtheorem*{thm:abfs_spmv}{Theorem \ref{thm:abfs_spmv}}
\begin{thm:abfs_spmv}
  Algorithm~\ref{alg:abfs_spmv} computes an algebraic Breadth-First Search in
  $O(m+n)$ time for sparse $G$.
\end{thm:abfs_spmv}

\newtheorem*{thm:abfs_pram}{Theorem \ref{thm:abfs_pram}}
\begin{thm:abfs_pram}
  Algorithm~\ref{alg:abfs_spmv} computes an algebraic Breadth-First Search over
  $t$ steps in $O(t)$ time and $O(m)$ work using $O(m/t)$ PRAM processors for
  sparse $G$.
\end{thm:abfs_pram}

The sequential and parallel versions of this algorithm are deterministic and
asymptotically optimal for any ordering of matrix and vector indices. The
current state-of-the-art SpMmSpV-BFS approaches are only optimal if the vector
indices are unordered~\cite{bib:yang2018, bib:azad_buluc2017}. It also appears
that other recent SpMmSpV methods take $O(mn)$ time overall for BFS because
their masking method requires an elementwise multiplication with a dense vector
or explicitly testing every vertex in each step~\cite{bib:yang2018,
  bib:yang2015, bib:buluc_madduri2011}.

\section{\label{sec:background}Background}
The sequential combinatorial algorithm for Breadth-First Search is
well-known~\cite{bib:cormen2009} and attributed to the 1959 discovery by
Moore~\cite{bib:moore1959}. The algorithm proceeds iteratively where in each
step it finds the neighbors of vertices from the previous step such that these
neighbors are also unique to the current step. Each step constructs a set of
vertices that are not in other steps, known as the frontier set. These sets are
the levels of the BFS tree and so the traversal is called level-synchronous. The
algorithm takes $O(m+n)$ time due to referencing $O(n)$ vertices and testing
$O(m)$ edges. To avoid cycles the algorithm must proceed one level at a
time. Since the graph diameter is bounded by $n$, then there are $D_G$ levels,
hence the search is inherently sequential. As of yet, there is no
sublinear-time, parallel algorithm for BFS that achieves $O(m+n)$ work.

Not long after Moore's discovery, Floyd and Warshall used the adjacency matrix
$A$ to solve transitive closure and shortest-path problems~\cite{bib:floyd1962,
  bib:warshall1962} which are generalizations of Breadth-First Search. The
linear algebraic BFS algorithm is simply the $\vec{x}_{k+1} = A\vec{x}_k$
recurrence. Suppose that $A$ is a sparse matrix, then computing an algebraic BFS
by $\vec{x}_{k+1} = A\vec{x}_k$ takes $O(mn)$ time because all $O(m)$ nonzeros
in $A$ are multiplied in all $O(n)$ steps. This Sparse Matrix-Vector (SpMV)
approach is clearly wasteful because nonzeros in $A$ will be multiplied by zeros
in $\vec{x}$. The situation is not improved using a Sparse Matrix Sparse Vector
(SpMSpV) multiplication where both the input vector and the matrix are in sparse
format. Applying SpMSpV in BFS can still take $\Omega(mD_G)$ time because
nonzeros reappear in $\vec{x}$, meaning every vertex is visited again, and so
$\vec{x}$ becomes dense. This is readily observed given a long path connected to
a clique where the search starts with a vertex in this clique. Hence it is not
enough to treat both $A$ and $\vec{x}$ as sparse, and consequently a
straightforward SpMSpV method for BFS is not optimal. But practical
implementations have provided speedup over the combinatorial
algorithm~\cite{bib:besta2017, bib:bucker2014, bib:beamer2012,
  bib:buluc_madduri2011, bib:kepner_gilbert2011}. These practical
implementations rely on the level-synchronous sequential algorithm where the
focus is on parallelizing the work within a level of the BFS tree. Newer methods
that mask frontier vertices in subsequent input vectors to avoid revisiting
vertices will be the subject of our review.

\section{\label{sec:related}Related work}
On average there are $\bar d = 2m/n$ nonzeros in each column of $A$. In a single
SpMSpV multiplication there are $f$ nonzeros in the sparse vector and thus
$\Omega(\bar{d} f)$ operations on average. Accessing a total of $f$ unique
columns in $A$ over all BFS steps then leads to $O(m)$ runtime, and is therefore
optimal for sparse graphs. This can be achieved by SpMmSpV methods that hide or
mask previously seen frontier nonzeros from the sparse input vector at each BFS
step. But many of the current approaches add more work that degrades the
runtime. In the following review of current SpMSpV and SpMmSpV algorithms for
BFS, we ignore any work related to initialization, parallelization, or other
overhead that do not affect the asymptotic complexity.

In an algebraic BFS on sparse graphs, the nonzeros from the multiplication must
be written to a new sparse output vector. Then using SpMSpV for the algebraic
BFS requires a multi-way merge due to the linear combination of either rows or
columns of $A$ that are projected by the nonzeros in the sparse vector in the
multiplication. Strategies for efficient merging include using a priority queue
(heap) or sorting, but this results in $\Omega(n\log n)$ runtime for
BFS~\cite{bib:yang2015}. Another popular method is to employ a sparse
accumulator (SPA)~\cite{bib:azad_buluc2017, bib:gilbert1992} which is comprised
of a dense vector of values, a dense vector of true/false flags, and an
unordered list of indices to nonzeros in the dense vector. This SPA can be used
to mask out previous frontier nonzeros. But it is stated
in~\cite{bib:azad_buluc2017} that there is no known algorithm for SpMSpV that
attains the lower-bound of $\Omega(\bar{d} f)$ if the indices in the sparse
vector must be sorted. This is because the list of row indices in the SPA must
be sorted if the sparse matrix was stored with ordered indices and the
multiplication algorithm requires that ordering~\cite{bib:gilbert1992}. Thus
SpMmSpV methods using a SPA take $\Omega(n\log n)$ time if the output vector
needs sorted indices, making their use in a BFS non-optimal.

The focus of these new SpMSpV methods is in efficiently reading and writing the
sparse vector. But there is an analysis gap on the asymptotic cost of preventing
previous frontier vertices in the BFS from reappearing in the sparse
vector. Masking out these frontier nonzeros was analyzed in~\cite{bib:yang2018}
and it appears to require an elementwise multiplication with a dense masking
vector, which must be $O(n)$ size to accommodate all vertices. This suggests
these SpMmSpV methods with masking take $O(mn)$ time for
BFS. In~\cite{bib:buluc_madduri2011} an elementwise multiplication with a dense
predecessor array is performed in each step of the BFS to mask the old frontier,
leading to $O(mn)$ runtime. The SpMmSpV method in~\cite{bib:buluc_madduri2011}
also required sorted output so either method of a priority queue or SPA leads to
suboptimal time. The SpMmSpV algorithm for BFS in~\cite{bib:yang2015} tests all
vertices in each step and zeros out those in the output vector that have already
been reached, leading to $\Omega(mn)$ time. A masked, column-based matrix-vector
method for BFS that relies on radix sorting is given in~\cite{bib:yang2018} but
takes $\Omega(m\log n)$ time. The authors allow unsorted indices to avoid the
$\Omega(\log n)$ factor but elementwise multiplication with the dense masking
vector results in $O(mn)$ time. Sorted vectors are also used
in~\cite{bib:azad_buluc2017}, thereby taking $\Omega(m\log n)$ time for BFS. A
version with unsorted indices is given in~\cite{bib:azad_buluc2017} but the
authors do not describe how visited vertices are avoided or masked.

Although our focus is on an optimal matrix-based BFS, we highlight a popular
method known as Direction-Optimizing BFS~\cite{bib:beamer2012}. The
Direction-Optimizing BFS is not asymptotically optimal, but works well in
practice for low-diameter, real-world graphs. For steps where the frontier is
large, the algorithm in~\cite{bib:beamer2012} switches to a ``bottom-up''
evaluation where the neighbors of each unvisited vertex, rather than a frontier
vertex, is scanned until a visited neighbor is found and thus the unvisited
vertex is marked visited. In practice this could perform fewer edge checks
because it stops on the first visited neighbor. But suppose a small fraction of
edges without a visited endpoint are checked during the
``bottom-up''phase. These edges could be in the same component of the BFS
traversal but are still some steps away from being reached, or they are in
different components and will not be reached. If this fraction is repeatedly
checked as a function of the input, it leads to sub-optimal runtime. For
example, if $\frac{m}{16}$ edges are checked as little as $\log n$ times, it
leads to $\Omega(m\log n)$ runtime.

\section{\label{sec:submatrix}Submatrix multiplication}
Recall that matrix-vector multiplication by the outer-product is the linear
combination of the column vectors in the matrix scaled by the entries in the
input vector as follows.

\begin{displaymath}
\begin{bmatrix}
0 & 1 & 1 & 0 & 0 \\
1 & 0 & 1 & 1 & 0 \\
1 & 1 & 0 & 0 & 0 \\
0 & 1 & 0 & 0 & 1 \\
0 & 0 & 0 & 1 & 0 \\
\end{bmatrix}
\begin{bmatrix}
0 \\
1 \\
1 \\
0 \\
0
\end{bmatrix}
=
\begin{bmatrix}
1 \\
0 \\
1 \\
1 \\
0
\end{bmatrix}
\oplus
\begin{bmatrix}
1 \\
1 \\
0 \\
0 \\
0
\end{bmatrix}
=
\begin{bmatrix}
1 \\
1 \\
1 \\
1 \\
0
\end{bmatrix}
\end{displaymath}

The neighbors of vertex $i$ are the nonzero elements in the $A_{*,i}$ column
vector of $A$. Here we show the dense matrix and vector for illustration only,
so the reader should keep in mind that zeros are ignored in the calculations
including those in the vectors. The linear combination of the $A_{*,2}$ and
$A_{*,3}$ columns result in the nonzeros at $1,2,3,4$ indices of the product
vector. In BFS this corresponds to finding the neighbors $1,3,4$ of vertex $2$
and neighbors $1,2$ of vertex $3$. The search continues by multiplying the
matrix with this new product vector. But in the next step the $A_{*,2}$ and
$A_{*,3}$ columns are projected again resulting in redundant operations that do
not add new vertices to the search. This will result in each vertex being
revisited leading to $O(mn)$ time. Masking nonzeros in $\vec{x}$ prevents their
recurrence and leads to $O(m)$ optimal time. But masking the vector alone still
incurs twice as many algebraic operations than theoretically needed. For
example, merely ignoring the $2,3$ elements in this next input vector does not
eliminate their recurrence because the columns $A_{*,1},A_{*,4}$ will give $2,3$
again in the following output vector. A theoretically optimal SpMmSpV method for
BFS will make $\sum_v d(v)= O(m)$ additional operations.

We notice that the computation can be performed over progressively smaller
submatrices of $A$ so at most half the nonzero elements in $A$ are needed. This
is because each unique frontier nonzero over all steps can be produced by a
single $A(i,j)$ element. As described earlier, nonzeros can reappear in the
input vector in later steps because they are found in the transpose of their
column vectors. Then after each step $k$, masking the transpose elements
$A_{j,*},A_{*,j}$ and $x_k(j)$ for each nonzero $j^{\text{th}}$ index of
$\vec{x}_k$ will ignore these nonzeros for all remaining steps. Moreover, only
column vectors are needed to progress the search so the $A_{j,*}$ row vectors
can be masked at step $k$, one step earlier than their transpose
counterparts. This eliminates returning frontier vertices that are
adjacent. Therefore the algebraic Breadth-First Search can be computed over
submatrices of $A$ in which the row and column indices that remain are those not
used in preceding steps. This is illustrated in
Figure~\ref{fig:abfs}. Consequently the number of algebraic operations are
reduced. For sparse graphs the algebraic method will now be as efficient as the
combinatorial BFS.

We emphasize that $A$ can be left unchanged by masking the appropriate
submatrices of $A$. It is also implicit in the following descriptions that the
submatrices of $A$ and the input vectors $\vec{x}$ are compatible at each step
because the column span of $A$ always matches the row dimension of the input
vector.

\begin{figure}[t]
\centering
\small
\resizebox{.35\textwidth}{!}{%
\begin{tikzpicture}
[scale=.1,
vertex/.style={circle,draw=black,thin},
edge/.style={thin}]
\node [vertex] (2) {2};
\node [vertex] (5) [right=of 2] {5};
\node [vertex] (3) [right=of 5] {3};
\node [vertex] (4) [right=of 3] {4}; 
\node [vertex] (1) [right=of 4] {1}; 
\draw [edge] (2) to (5);
\draw [edge] (5) to (3);
\draw [edge] (3) to (4);
\draw [edge] (4) to (1);
\end{tikzpicture}
}
\\
\resizebox{.35\textwidth}{!}{%
\begin{minipage}{.25\textwidth}
\begin{align*}
\begin{array}{c}
\begin{matrix} 1 & 2 & 3 & 4 & 5
\end{matrix} \\
\begin{matrix}
1 \\
2 \\
3 \\
4 \\
5
\end{matrix}
\begin{bmatrix}
0 & 0 & 0 & 1 & 0 \\
\tikznode{A2S1Left}{0} & 0 & 0 & 0 & \tikznode{A2S1Right}{1} \\
0 & 0 & 0 & 1 & 1 \\
1 & 0 & 1 & 0 & 0 \\
0 & 1 & 1 & 0 & 0
\end{bmatrix}
\end{array}
\begin{array}{c}
\begin{matrix} \end{matrix} \\
\begin{bmatrix}
0 \\
1 \\
0 \\
0 \\
0
\end{bmatrix}
\end{array}
&=
\begin{array}{c}
\begin{matrix} \end{matrix} \\
\begin{bmatrix}
0 \\
\tikznode{C2S1}{0} \\
0 \\
0 \\
1
\end{bmatrix}
\end{array}
& \text{Step 1} \\
\begin{array}{c}
\begin{matrix} 1 & 2 & 3 & 4 & 5
\end{matrix} \\
\begin{matrix}
1 \\
2 \\
3 \\
4 \\
5
\end{matrix}
\begin{bmatrix}
0 & \tikznode{A2S2Top}{0} & 0 & 1 & 0 \\
\tikznode{A2S2Left}{0} & 0 & 0 & 0 & \tikznode{A2S2Right}{1} \\
0 & 0 & 0 & 1 & 1 \\
1 & 0 & 1 & 0 & 0 \\
\tikznode{A5S2Left}{0} & \tikznode{A2S2Bottom}{1} & 1 & 0
& \tikznode{A5S2Right}{0}
\end{bmatrix}
\end{array}
\begin{array}{c}
\begin{matrix} \end{matrix} \\
\begin{bmatrix}
0 \\
\tikznode{B2S2}{0} \\
0 \\
0 \\
1
\end{bmatrix}
\end{array}
&=
\begin{array}{c}
\begin{matrix} \end{matrix} \\
\begin{bmatrix}
0 \\
\tikznode{C2S2}{0} \\
1 \\
0 \\
\tikznode{C5S2}{0} \\
\end{bmatrix}
\end{array}
& \text{Step 2} \\
\begin{array}{c}
\begin{matrix} 1 & 2 & 3 & 4 & 5
\end{matrix} \\
\begin{matrix}
1 \\
2 \\
3 \\
4 \\
5
\end{matrix}
\begin{bmatrix}
0 & \tikznode{A2S3Top}{0} & 0 & 1 & \tikznode{A5S3Top}{0} \\
\tikznode{A2S3Left}{0} & 0 & 0 & 0 & \tikznode{A2S3Right}{1} \\
\tikznode{A3S3Left}{0} & 0 & 0 & 1 & \tikznode{A3S3Right}{1} \\
1 & 0 & 1 & 0 & 0 \\
\tikznode{A5S3Left}{0} & \tikznode{A2S3Bottom}{1} & 1 & 0 &
\tikznode{A5S3Bottom}{0}
\end{bmatrix}
\end{array}
\begin{array}{c}
\begin{matrix} \end{matrix} \\
\begin{bmatrix}
0 \\
\tikznode{B2S3}{0} \\
1 \\
0 \\
\tikznode{B5S3}{0} \\
\end{bmatrix}
\end{array}
&=
\begin{array}{c}
\begin{matrix} \end{matrix} \\
\begin{bmatrix}
0 \\
\tikznode{C2S3}{0} \\
\tikznode{C3S3}{0} \\
1 \\
\tikznode{C5S3}{0} \\
\end{bmatrix}
\end{array}
& \text{Step 3} \\
\begin{array}{c}
\begin{matrix} 1 & 2 & 3 & 4 & 5
\end{matrix} \\
\begin{matrix}
1 \\
2 \\
3 \\
4 \\
5
\end{matrix}
\begin{bmatrix}
0 & \tikznode{A2S4Top}{0} & \tikznode{A3S4Top}{0} & 1 & \tikznode{A5S4Top}{0} \\
\tikznode{A2S4Left}{0} & 0 & 0 & 0 & \tikznode{A2S4Right}{1} \\
\tikznode{A3S4Left}{0} & 0 & 0 & 1 & \tikznode{A3S4Right}{1} \\
\tikznode{A4S4Left}{1} & 0 & 1 & 0 & \tikznode{A4S4Right}{0} \\
\tikznode{A5S4Left}{0} & \tikznode{A2S4Bottom}{1} & \tikznode{A3S4Bottom}{1} & 0
& \tikznode{A5S4Bottom}{0}
\end{bmatrix}
\end{array}
\begin{array}{c}
\begin{matrix} \end{matrix} \\
\begin{bmatrix}
0 \\
\tikznode{B2S4}{0} \\
\tikznode{B3S4}{0} \\
1 \\
\tikznode{B5S4}{0} \\
\end{bmatrix}
\end{array}
&=
\begin{array}{c}
\begin{matrix} \end{matrix} \\
\begin{bmatrix}
1 \\
\tikznode{C2S4}{0} \\
\tikznode{C3S4}{0} \\
\tikznode{C4S4}{0} \\
\tikznode{C5S4}{0} \\
\end{bmatrix}
\end{array}
& \text{Step 4}
\end{align*}
\StrikeRow[black, thick]{A2S1Left}{A2S1Right}
\StrikeRow[black, thick]{C2S1}{C2S1}
\StrikeColumn[black, thick]{A2S2Top}{A2S2Bottom}
\StrikeRow[black, thick]{A2S2Left}{A2S2Right}
\StrikeRow[black, thick]{A5S2Left}{A5S2Right}
\StrikeRow[black, thick]{B2S2}{B2S2}
\StrikeRow[black, thick]{C2S2}{C2S2}
\StrikeRow[black, thick]{C5S2}{C5S2}
\StrikeColumn[black, thick]{A2S3Top}{A2S3Bottom}
\StrikeColumn[black, thick]{A5S3Top}{A5S3Bottom}
\StrikeRow[black, thick]{A2S3Left}{A2S3Right}
\StrikeRow[black, thick]{A3S3Left}{A3S3Right}
\StrikeRow[black, thick]{A5S3Left}{A5S3Bottom}
\StrikeRow[black, thick]{B2S3}{B2S3}
\StrikeRow[black, thick]{B5S3}{B5S3}
\StrikeRow[black, thick]{C2S3}{C2S3}
\StrikeRow[black, thick]{C3S3}{C3S3}
\StrikeRow[black, thick]{C5S3}{C5S3}
\StrikeColumn[black, thick]{A2S4Top}{A2S4Bottom}
\StrikeColumn[black, thick]{A3S4Top}{A3S4Bottom}
\StrikeColumn[black, thick]{A5S4Top}{A5S4Bottom}
\StrikeRow[black, thick]{A2S4Left}{A2S4Right}
\StrikeRow[black, thick]{A3S4Left}{A3S4Right}
\StrikeRow[black, thick]{A4S4Left}{A4S4Right}
\StrikeRow[black, thick]{A5S4Left}{A5S4Bottom}
\StrikeRow[black, thick]{B2S4}{B2S4}
\StrikeRow[black, thick]{B3S4}{B3S4}
\StrikeRow[black, thick]{B5S4}{B5S4}
\StrikeRow[black, thick]{C2S4}{C2S4}
\StrikeRow[black, thick]{C3S4}{C3S4}
\StrikeRow[black, thick]{C4S4}{C4S4}
\StrikeRow[black, thick]{C5S4}{C5S4}
\end{minipage}
}
\caption{\label{fig:abfs}
Optimal algebraic BFS starting from vertex 2.}
\end{figure}
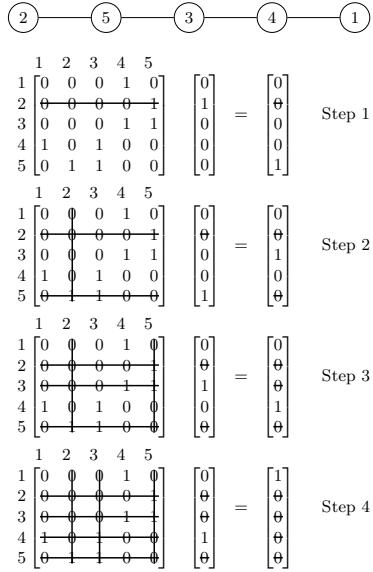

\begin{definition}
  Let $V_k$ be the set of remaining indices to vertices not yet visited by step
  $k$ of the BFS. That is $V_k = V_{k-1} \setminus \supp(\vec{x}_{k-1})$, where
  $V_0$ contains $\{1,2,\ldots,n\}$ and $\vec{x}_0=\vec{0}$.
\end{definition}

\begin{theorem}
  \label{thm:abfs}
  Breadth-First Search can be computed by
  $\vec{x}_{k+1}=A[V_{k+1},V_k]\vec{x}_k$ for $k=1,\ldots,O(n)$ steps.
\end{theorem}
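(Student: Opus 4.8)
The plan is to prove this by induction on $k$, showing that the frontier produced by the submatrix recurrence coincides with the frontier produced by the standard combinatorial (level-synchronous) BFS, so that after finitely many steps every reachable vertex is assigned a level. First I would establish the base case: with $\vec{x}_1$ the indicator of the source $s$, $V_1 = V_0 \setminus \supp(\vec{x}_0) = V_0 = [n]$ (or, in the one-source convention, $V_1 = [n]\setminus\{$already-visited$\}$, which here is all of $[n]$), so $A[V_2,V_1]\vec{x}_1$ is just $A\vec{x}_1$ restricted to rows outside $\supp(\vec{x}_1)=\{s\}$, i.e. exactly the neighbors of $s$ other than $s$ itself — which is the first BFS level.

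The inductive step is the heart of the argument. Assuming $\supp(\vec{x}_k)$ equals the set $L_k$ of vertices at distance exactly $k$ from $s$, and $V_k = [n]\setminus(L_0\cup\cdots\cup L_{k-1})$ (the vertices at distance $\ge k$ or unreachable), I would show two things. (1) \emph{Correctness of the new nonzeros:} $\supp(A[V_{k+1},V_k]\vec{x}_k) = L_{k+1}$. The outer-product view from Section~\ref{sec:submatrix} gives $\supp(A\vec{x}_k) = \bigcup_{v\in L_k} N(v)$, the union of neighborhoods of the current frontier; restricting columns to $V_k$ does not drop anything since $\supp(\vec{x}_k)=L_k\subseteq V_k$; restricting rows to $V_{k+1} = V_k\setminus L_k$ deletes exactly the entries at indices in $L_{k-1}\cup L_k$ (indices in earlier levels were already removed). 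A vertex $w$ is a neighbor of some $v\in L_k$ iff $\mathrm{dist}(s,w)\in\{k-1,k,k+1\}$, so after deleting rows in $L_{k-1}\cup L_k$ we are left precisely with the $w$ at distance $k+1$, i.e. $L_{k+1}$. (2) \emph{Consistency of the bookkeeping:} $V_{k+2} = V_{k+1}\setminus\supp(\vec{x}_{k+1}) = V_{k+1}\setminus L_{k+1} = [n]\setminus(L_0\cup\cdots\cup L_k)$, closing the induction. I would also note that $\vec{x}_{k+1}$ lives on index set $V_{k+1}$ so the matrix $A[V_{k+2},V_{k+1}]$ and vector $\vec{x}_{k+1}$ are dimensionally compatible, as remarked in the text.

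Finally, for the step count: the levels $L_1, L_2, \ldots$ are disjoint nonempty subsets of $V$ until the reachable component is exhausted, so there are at most $n$ nonempty frontiers, and the recurrence terminates (produces the zero vector) after $k = O(n)$ steps — in fact after $D_G+1$ steps, and $D_G \le n-1$. This gives the claimed range $k = 1,\ldots,O(n)$. I expect the main obstacle to be stating step (1) cleanly: one must be careful that the \emph{only} reason a vertex could reappear in $A\vec{x}_k$ is that it is adjacent to the frontier and lies at distance $k-1$ or $k$, and that masking the rows $L_{k-1}$ (via the earlier $V$-updates) together with masking the rows $L_k$ — which is precisely the row-masking ``one step earlier'' described in the prose, effected by passing to $V_{k+1}$ rather than $V_k$ for the row index set — removes exactly these and nothing that belongs to $L_{k+1}$. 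Making that ``distance $\in\{k-1,k,k+1\}$'' dichotomy rigorous, and reconciling it with the slightly informal one-source initialization ($\vec{x}_0=\vec 0$ versus $\vec{x}_1=\vec e_s$), is the only delicate point; the rest is routine set algebra.
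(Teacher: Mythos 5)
Your proof is correct, but it follows a genuinely more explicit route than the paper's. The paper's argument stays at the level of the outer-product expansion $\vec{x}_{k+1}=\bigoplus_j x_k(j)\otimes A_{*,j}$: it observes that a column $A_{*,j}$ already used at step $k$ cannot contribute any new nonzero later, and that the rows $A_{j,*}$ for $j\in\supp(\vec{x}_k)$ may be masked already at step $k$ because only column vectors enter the linear combination; from this ``removal is harmless'' observation it concludes the masked recurrence still advances the search, without ever writing down an invariant tying $\supp(\vec{x}_k)$ to BFS levels. You instead prove by induction the precise invariant that $\supp(\vec{x}_{k+1})$ equals the next BFS level and $V_{k+1}$ equals the complement of all earlier levels, using the distance dichotomy $\mathrm{dist}(s,w)\in\{k-1,k,k+1\}$ for neighbors of the frontier to show that deleting the rows indexed by the two most recent levels removes exactly the already-visited vertices (including adjacent same-level frontier vertices) and nothing belonging to the new level, and you supply the explicit termination bound of $D_G+1\le n$ steps justifying the $O(n)$ range. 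Your version buys a self-contained equivalence with the combinatorial level-synchronous BFS; the paper's shorter structural argument buys brevity and isolates the algebraic fact (columns-only linear combination permits masking rows one step early) that is then reused in Claims~\ref{clm:abfs_column} and~\ref{clm:abfs_transpose} and in the operation counts of Theorem~\ref{thm:abfs_ops}. Two minor points to tidy: your inductive hypothesis labels $\supp(\vec{x}_k)$ as the distance-$k$ level while your base case yields $\supp(\vec{x}_1)=\{s\}$ at distance $0$, so the invariant should be shifted by one index (e.g.\ $\supp(\vec{x}_{k+1})=L_k$); and the identity $\supp(A\vec{x}_k)=\bigcup_{v\in L_k}N(v)$ implicitly uses that the semirings allowed in Section~\ref{sec:notation} admit no cancellation, which deserves one sentence.
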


\begin{proof}
  Recall the matrix-vector outer-product is a linear combination of column
  vectors in $A$,

  \begin{displaymath}
    \vec{x}_{k+1} = \bigoplus_j x_k(j)\otimes A_{*,j}.
  \end{displaymath}

  Only nonzeros in $\vec{x}_k$ can produce nonzeros in the resultant vector
  $\vec{x}_{k+1}$ because any nonzero $x_{k+1}(i)$ vector element is due to a
  nonzero $x_k(j)\otimes A(i,j)$ product. Also observe that a $A_{*,j}$ column
  vector can only produce a nonzero $x_{k+1}(i)$ if $A(i,j)$ is nonzero. Thus,
  subsequent operations on the $A_{*,j}$ column vector do not produce new
  $x_{k+1}(i)$ nonzeros. For Breadth-First Search this does not update a new
  level. Then for each $j$ in $\supp(\vec{x}_k)$ at step $k$, the
  $A_{j,*},A_{*,j}$ can be ignored in all remaining steps. Since only column
  vectors are used in the linear combination then each $A_{j,*}$ row vector can
  also be ignored at step $k$. Hence leading to
  $\vec{x}_{k+1}=A[V_{k+1},V_k]\vec{x}_k$ as claimed.
\end{proof}

\begin{claim}
  \label{clm:abfs_column}
  Each column vector in $A$, and subsequently each element of $A$, is multiplied
  at most once in computing Breadth-First Search by Theorem~\ref{thm:abfs}.
\end{claim}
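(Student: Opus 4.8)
The plan is to track two simple invariants as $k$ grows and then read the claim off from them. First I would record, by induction on $k$, the facts that make the masking permanent: (i) $\supp(\vec{x}_k)\subseteq V_k$, because $\vec{x}_{k+1}=A[V_{k+1},V_k]\vec{x}_k$ is by construction a vector indexed by $V_{k+1}$, so all of its nonzeros lie in $V_{k+1}$; and (ii) the index sets are nested and strictly shrinking, $V_{k+1}=V_k\setminus\supp(\vec{x}_k)\subseteq V_k$, directly from the definition of $V_k$. Combining (i) and (ii) gives the separation property $\supp(\vec{x}_k)\cap V_{k'}=\emptyset$ for every $k'>k$, and in particular the supports $\supp(\vec{x}_1),\supp(\vec{x}_2),\ldots$ are pairwise disjoint.

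Next I would make precise what ``a column is multiplied'' means in the outer-product form used in the proof of Theorem~\ref{thm:abfs}: at step $k$ the product is $\vec{x}_{k+1}=\bigoplus_{j\in V_k} x_k(j)\otimes A_{*,j}$, and the column $A_{*,j}$ is actually scaled by a nonzero exactly when $j\in\supp(\vec{x}_k)$. So $A_{*,j}$ is multiplied at step $k$ if and only if $j\in\supp(\vec{x}_k)$. By the separation property, if $j\in\supp(\vec{x}_k)$ then $j\notin V_{k'}$ for all $k'>k$, so $A_{*,j}$ is absent from the submatrix $A[V_{k'+1},V_{k'}]$ in every later step and cannot be multiplied again; within step $k$ itself the column is plainly used exactly once. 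Hence each column vector of $A$ is multiplied in at most one step, and at most once within that step.

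Finally, for the element-level statement, every entry $A(i,j)$ belongs to exactly one column $A_{*,j}$, and the outer-product touches $A(i,j)$ only while forming that column's contribution $x_k(j)\otimes A_{*,j}$ (the computation accesses columns only). Since $A_{*,j}$ participates in at most one product over the whole run, $A(i,j)$ is multiplied at most once, which is the claim.

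I expect the only real obstacle to be phrasing the invariants cleanly rather than any substantive difficulty: one must pin down that $\supp(\vec{x}_k)\subseteq V_k$ is exactly what licenses dropping a column permanently, and read ``multiplied'' as ``scaled by a nonzero vector entry'' so that columns $j\in V_k\setminus\supp(\vec{x}_k)$ that merely sit inside the submatrix but meet a zero of $\vec{x}_k$ are correctly not counted. Once these points are fixed, the argument is a one-line induction plus the disjointness observation.
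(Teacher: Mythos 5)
Your proof is correct and follows essentially the same route as the paper's: you observe that a column $A_{*,j}$ is only scaled when $j\in\supp(\vec{x}_k)$, that the row masking forces $\supp(\vec{x}_{k+1})\subseteq V_{k+1}$, and that the nested, shrinking sets $V_k$ then exclude $j$ (and hence $A_{*,j}$ and its entries) from every later submatrix — which is exactly the paper's argument that $A_{*,j},A_{j,*}$ are prohibited in all remaining steps so no later step can reproduce $j$. Your version merely states the invariants and the disjointness of the supports more explicitly; no substantive difference.
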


\begin{proof}
  Only nonzeros are required in Breath-First Search so the matrix-vector product
  in Theorem~\ref{thm:abfs} is computed from only the $j$ indices in $V_k$ by

  \begin{displaymath}
    x_{k+1}(i) = \bigoplus_{j\in V_k \cap \supp(\vec{x}_k)}
    x_k(j) \otimes A(i,j).
  \end{displaymath}

  Now for all remaining steps since $V_k$ does not contain indices from
  $V_{k-1}$ then the multiplication over the submatrix $A[V_{k+1},V_k]$ does not
  include $A_{*,j},A_{j,*}$. Then there cannot be a vertex $i$ at some later
  step that produces $j$ by $x_{k+1}(j) = x_k(i) \otimes A(j,i)$ because all
  $A_{j,*}$ are prohibited. Thus each $A_{*,j}$ column vector can be multiplied
  only once and subsequently any element in $A$ is accessed no more than once.
\end{proof}

Notice it is valid to compute BFS by $\vec{x}_{k+1}=A[V_k,V_k]\vec{x}_k$ where
the submatrices are symmetric in each step, and still remain asymptotically
optimal. Using this symmetric form makes some of the theoretical results more
convenient. For example, we can show there is a linear transformation on the
conventional algebraic recurrence that produces our output. Namely, we can
derive $\vec{x}_{k+1}=A[V_k,V_k]\vec{y}_k$ where we use
$\vec{y}_k=A\vec{y}_{k-1}$ to denote the conventional BFS recurrence. This leads
to $\vec{x}_{k+1}=A[V_k,V_k]A^{k-1}\vec{x}_1$.

\begin{proposition}
  \label{prop:linear_transformation}
  There is a linear transformation on $\vec{y}_k = A\vec{y}_{k-1}$ that gives
  $\vec{x}_{k+1}=A[V_k,V_k]\vec{x}_k$, specifically $\vec{x}_{k+1} =
  A_k\vec{y}_k$ and subsequently $\vec{x}_{k+1}=A_kA^{k-1}\vec{x}_1$.
\end{proposition}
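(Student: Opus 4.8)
The plan is to exhibit a concrete matrix $A_k$ depending only on $V_k$ (hence on the first $k-1$ steps of the standard recurrence) such that $A_k \vec{y}_k = \vec{x}_{k+1}$, and then unwind $\vec{y}_k = A^{k-1}\vec{x}_1$ to get the second claim for free. The natural candidate is the coordinate-projection-and-restriction operator: let $P_k$ be the $n\times n$ diagonal $0/1$ matrix with $P_k(i,i) = 1$ iff $i \in V_k$, and set $A_k = P_k A P_k$. This is exactly ``mask the rows and columns of $A$ outside $V_k$,'' which the paper has already argued (in Theorem~\ref{thm:abfs} and Claim~\ref{clm:abfs_column}) computes the symmetric-submatrix recurrence; writing it as $P_kAP_k$ just records that masking is linear. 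Since $A_k$ is a fixed matrix once $V_k$ is known, $\vec{x}_{k+1} = A_k\vec{y}_k$ is a linear transformation of $\vec{y}_k$, which is the first assertion.

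The steps, in order, would be: (i) define $P_k$ and $A_k = P_kAP_k$ as above; (ii) observe $A_k$ is well-defined because $V_k$ is determined by $\vec{x}_0,\ldots,\vec{x}_{k-1}$, equivalently by $\vec{y}_1,\ldots,\vec{y}_{k-1}$ via $\supp(\vec{y}_j)\supseteq\supp(\vec{x}_j)$ together with the recursive definition of $V_k$ — or more cleanly, just take $V_k$ as given by the BFS run, since the proposition only claims the transformation exists; (iii) check the key identity $P_k A P_k \vec{y}_k = A[V_k,V_k]\vec{x}_k$. For (iii) the essential point is that $\vec{x}_k$ and $P_k\vec{y}_k$ agree: $\vec{x}_k$ is supported on $V_k$ by construction (a vertex found at step $k$ is removed only at step $k{+}1$, so $\supp(\vec{x}_k)\subseteq V_k$), and on those coordinates $\vec{x}_k$ and $\vec{y}_k$ carry the same nonzero/zero pattern because both equal $A$ applied to the previous frontier restricted appropriately — this is precisely what Theorem~\ref{thm:abfs} established when it showed the submatrix recurrence reproduces BFS. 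So $P_k \vec{y}_k$ and $\vec{x}_k$ have the same support, and applying $P_k A$ to either, with the left $P_k$ discarding coordinates outside $V_{k+1}$, yields $\vec{x}_{k+1}$. Finally, substituting $\vec{y}_k = A^{k-1}\vec{x}_1$ gives $\vec{x}_{k+1} = A_k A^{k-1}\vec{x}_1$.

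The main obstacle is the bookkeeping in step (iii): one must be careful that $\vec{y}_k$ may have \emph{extra} nonzeros (at vertices in $V_{k-1}\setminus V_k$, i.e.\ revisited earlier-frontier vertices) that $\vec{x}_k$ does not, and one must confirm those extra coordinates are exactly the ones killed by the right-hand $P_k$ in $P_kAP_k$, so that they never contribute to $\vec{x}_{k+1}$. The cleanest way to discharge this is an induction on $k$ showing $P_k \vec{y}_k = \vec{x}_k$ as the inductive hypothesis, using the definition $V_{k+1} = V_k \setminus \supp(\vec{x}_k)$ and $\vec{y}_{k+1} = A\vec{y}_k$; the inductive step is then $\vec{x}_{k+1} = P_{k+1}A P_k \vec{x}_k = P_{k+1}A P_k (P_k\vec{y}_k) = P_{k+1}A(P_k\vec{y}_k)$, and one notes $A$ applied to $\vec{y}_k$ versus to $P_k\vec{y}_k$ differ only on coordinates already outside $V_{k+1}$ — precisely the argument of Claim~\ref{clm:abfs_column} that masked rows never regenerate removed indices. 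Everything else is routine matrix algebra, and $A_k = P_kAP_k$ makes the ``linear transformation'' language literal.
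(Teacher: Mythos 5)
Your proposal is correct and is essentially the paper's own argument: your $P_k$ is exactly the paper's selection matrix $S_k$, your $A_k = P_k A P_k$ is the identity $A_k = S_k A S_k$ that the paper establishes via its appendix claims, and your key step $P_k\vec{y}_k = \vec{x}_k$ is the same identity ($S_k\vec{y}_k = \vec{x}_k$) on which the paper's proof of the proposition rests, followed by the same substitution $\vec{y}_k = A^{k-1}\vec{x}_1$. The only difference is presentational — the paper defines $A_k$ entrywise and proves $A_k = S_kAS_k$ by induction, while you take the masked product as the definition and push the induction into proving $P_k\vec{y}_k=\vec{x}_k$ — which does not change the substance.
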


Thus we can show the equality between our method and the conventional
recurrence. An interested reader can refer to the
Appendix~\ref{sec:lineartransform} for the proof.

\section{\label{sec:bounds}Bounds on algebraic operations}
The submatrices $A[V_{k+1},V_k]$ strictly decrease in size as the search
progresses and so the number of algebraic operations also decreases. A simple
analysis will show that our submatrix approach accesses at most $m$ nonzeros on
any sparse graph. Given flexibility on updating the mask, it only requires $n-1$
nonzeros and therefore we can design an algorithm that takes $O(n)$ algebraic
operations as opposed $O(m)$ operations of an optimal SpMmSpV-BFS. Moreover,
this $O(n)$ bound holds for both undirected and directed graphs, demonstrating
that our technique has significant advantages over that of an optimal
SpMmSpV-BFS.

It is easy to see that an optimal SpMmSpV-BFS method multiplies $2m$ nonzeros in
$A$ because each column of $A$, and therefore every nonzero in $A$, is used in
the computation. Our method takes half as many operations.

\begin{claim}
  \label{clm:abfs_transpose}
  The transpose of any $A(i,j)$ multiplied in
  $\vec{x}_{k+1}=A[V_{k+1},V_k]\vec{x}$ will not be multiplied at any step.
\end{claim}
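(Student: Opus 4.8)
The plan is to leverage the masking structure already established by Claim~\ref{clm:abfs_column} and Theorem~\ref{thm:abfs}. The key observation is that in the recurrence $\vec{x}_{k+1}=A[V_{k+1},V_k]\vec{x}_k$, a nonzero $A(i,j)$ is multiplied at step $k$ only if $j\in V_k$ (the column index is still unmasked) and $i\in V_{k+1}$ (the row index survives into the next step). Since $j\in\supp(\vec{x}_k)$ whenever $A(i,j)$ actually contributes, the index $j$ leaves the surviving set: $j\notin V_{k+1}$, and hence $j\notin V_\ell$ for every $\ell>k$ because the sets $V_\ell$ are nested decreasing by the Definition. The transpose entry $A(j,i)$ (which equals $A(i,j)$ by symmetry of $A$) would only ever be multiplied at some step $\ell$ as part of the submatrix $A[V_{\ell+1},V_\ell]$, and that requires its row index $j$ to lie in $V_{\ell+1}$. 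But $j$ has been permanently removed, so $A(j,i)$ is never referenced again.

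First I would fix notation: suppose $A(i,j)$ is multiplied at step $k$, meaning $x_{k+1}(i)$ receives the product $x_k(j)\otimes A(i,j)$, so in particular $i\in V_{k+1}$, $j\in V_k$, and $x_k(j)\neq 0$, hence $j\in\supp(\vec{x}_k)$. Second, I would invoke the Definition to get $V_{k+1}=V_k\setminus\supp(\vec{x}_k)\not\ni j$, and then note that $V_0\supseteq V_1\supseteq\cdots$ is a decreasing chain, so $j\notin V_\ell$ for all $\ell\ge k+1$. Third, I would observe that for the transpose entry $A(j,i)$ to be multiplied at some step $\ell$, it must appear in the submatrix $A[V_{\ell+1},V_\ell]$, which requires $j\in V_{\ell+1}$; but we cannot have $\ell+1\le k$ (the matrices only shrink, and $j$ with row $j$ masked out at step $k$ was already... — actually I would just handle $\ell\ge k$ directly and note that for $\ell<k$ the row $j$ would have to reappear, which is impossible since $\supp(\vec x_\ell)\subseteq V_\ell$ and masking is monotone). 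Cleaner: simply argue that once step $k$ occurs, $j$ is gone from every $V_\ell$ with $\ell>k$, and at steps $\ell\le k$ the row index $j$ is available but the column index $i$ must be in $V_\ell$; I should check whether $i$ could still be present earlier. Here I realize the statement only claims the transpose is not multiplied \emph{at any step}, so I should be careful: if $A(i,j)$ is multiplied at step $k$, could $A(j,i)$ have been multiplied at some earlier step $\ell<k$? If so that would mean $i\in\supp(\vec x_\ell)$, forcing $i\notin V_{\ell+1}\supseteq V_{k+1}$, contradicting $i\in V_{k+1}$. So the earlier-step case is also ruled out.

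The main obstacle — really the only subtlety — is handling both temporal directions cleanly: ruling out the transpose being multiplied \emph{after} step $k$ (immediate from the nested, decreasing $V_\ell$ and $j\notin V_{k+1}$) and ruling it out \emph{before} step $k$ (which uses that $i\in V_{k+1}$ together with the fact that $A(j,i)$ multiplied at step $\ell$ would put $i$ in $\supp(\vec x_\ell)$, hence out of all later $V$'s, contradiction). Both cases reduce to the single structural fact that each vertex index is removed from $V$ exactly once and never returns, which is exactly what Claim~\ref{clm:abfs_column} already encapsulates; so the proof is essentially a two-line consequence of that claim plus symmetry $A(i,j)=A(j,i)$. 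I would present it in roughly that form, citing Claim~\ref{clm:abfs_column} and the Definition of $V_k$, and conclude that the transpose of any multiplied $A(i,j)$ is never itself multiplied, so the computation touches at most $m$ of the $2m$ nonzeros of $A$.
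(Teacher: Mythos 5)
Your proof is correct and follows essentially the same route as the paper's: once $j\in\supp(\vec{x}_k)$, the row and column indexed by $j$ are masked from every subsequent submatrix $A[V_{\ell+1},V_\ell]$ (and row $j$ already at step $k$ itself), so $A(j,i)$ is never referenced. The only difference is that you also make explicit the earlier-step case (using $i\in V_{k+1}$ and the nesting $V_{k+1}\subseteq V_{\ell+1}$), which the paper leaves implicit; that is a harmless refinement, not a different argument.
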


\begin{proof}
  At each step $k$, the column vectors $A_{*,j}$ for $j\in \supp(\vec{x}_k)$ are
  multiplied. Then these column vectors and their transpose are masked in all
  remaining steps. Thus any $A(i,j)$ element multiplied at step $k$, and its
  transpose element $A(j,i)$, will not be accessed in later steps. Moreover, the
  $A(j,i)$ elements are not multiplied in step $k$ because the $A_{j,*}$ row
  vectors are also masked at step $k$. Therefore the transpose of the $A(i,j)$
  matrix elements multiplied in computing $\vec{x}_{k+1}=A[V_{k+1},V_k]\vec{x}$
  are not used.
\end{proof}

\begin{lemma}
  \label{lem:abfs_nnz}
  Computing Breadth-First Search by $\vec{x}_{k+1}=A[V_{k+1},V_k]\vec{x}_k$
  multiplies at most $m$ nonzeros in $A$.
\end{lemma}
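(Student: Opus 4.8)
The plan is to bound the total number of nonzeros multiplied over all steps by charging each multiplied nonzero to a distinct matrix entry, and then showing that the charged entries together occupy at most half of the $2m$ off-diagonal nonzero positions, plus accounting for the diagonal. First I would recall from Claim~\ref{clm:abfs_column} that each column vector $A_{*,j}$ is multiplied at most once across the entire search, so every nonzero entry $A(i,j)$ that is ever multiplied is multiplied exactly once; thus the total operation count equals $|S|$ where $S$ is the set of matrix positions $(i,j)$ that get multiplied at some step. Next I would invoke Claim~\ref{clm:abfs_transpose}: if $(i,j)\in S$ then $(j,i)\notin S$. Since $A$ is symmetric, the $2m$ nonzero positions of $A$ come in transpose pairs $\{(i,j),(j,i)\}$ for $i\neq j$ (and $A$ has no diagonal nonzeros for a simple graph, so there are exactly $m$ such unordered pairs). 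By Claim~\ref{clm:abfs_transpose}, $S$ contains at most one position from each such pair, hence $|S|\le m$.

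The cleanest way to write this is: partition the off-diagonal nonzero positions of $A$ into the $m$ transpose-pairs $P_e=\{(i,j),(j,i)\}$, one for each edge $e=(i,j)\in E$. For each pair, Claim~\ref{clm:abfs_transpose} forbids both members from lying in $S$, so $|S\cap P_e|\le 1$. Summing over the $m$ edges gives $|S|=\sum_e |S\cap P_e|\le m$. Combined with the observation that the number of algebraic multiplications equals $|S|$ (each such position multiplied exactly once, by Claim~\ref{clm:abfs_column}), we conclude the BFS recurrence multiplies at most $m$ nonzeros in $A$, as claimed.

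The only subtlety worth spelling out is why the ``multiplied at most once'' statement lets us identify the operation count with $|S|$ rather than something larger: a nonzero entry $A(i,j)$ contributes a multiplication $x_k(j)\otimes A(i,j)$ only when $j\in V_k\cap\supp(\vec{x}_k)$ at the step where column $j$ is processed, and Claim~\ref{clm:abfs_column} guarantees this happens for at most one value of $k$; if it never happens, that entry contributes zero operations. So the count is exactly the number of entries that do get touched, which is $|S|$. I would also remark that the diagonal causes no issue here: for a simple graph $A(i,i)=0$ for all $i$, so no diagonal entries are ever multiplied, and the bound $m$ needs no correction term. (If one wanted the argument to degrade gracefully to graphs with self-loops, each self-loop is its own transpose, contributing at most one extra multiplied nonzero per loop, which is still absorbed into $m$ counting loops as edges.)

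I do not anticipate a real obstacle: the lemma is essentially a bookkeeping corollary of the two preceding claims. The one place to be careful is to make the transpose-pairing explicit and to state once and for all that ``multiplied at most once'' turns a sum-over-steps into a set-cardinality, so that Claim~\ref{clm:abfs_transpose}'s ``at most one of each pair'' cleanly yields the factor of $\tfrac12$ relative to $2m$.
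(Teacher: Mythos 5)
Your proposal is correct and follows essentially the same route as the paper, which derives the lemma directly from Claim~\ref{clm:abfs_transpose} (with Claim~\ref{clm:abfs_column} ensuring no entry is counted more than once); you have merely made the transpose-pairing of the $2m$ nonzeros and the resulting factor of one half explicit. No gaps.
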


It follows from Claim~\ref{clm:abfs_transpose} that Lemma~\ref{lem:abfs_nnz}
holds. Now observe that only one nonzero in each row of $A$ is required to
produce a correct output in $\vec{x}$. Hence at most $n$ nonzeros in $A$ are
needed to compute BFS by this method. It also does not require a symmetric
matrix and therefore holds for both undirected and directed graphs. This leads
to only $O(n)$ algebraic operations.

\begin{theorem}
  \label{thm:abfs_ops} Computing Breadth-First Search by
  $\vec{x}_{k+1}=A[V_{k+1},V_k]\vec{x}_k$ using one nonzero in each row will
  multiply $n-1$ nonzeros in $A$.
\end{theorem}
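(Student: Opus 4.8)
The plan is to bound the number of nonzeros multiplied across all steps of the recurrence $\vec{x}_{k+1}=A[V_{k+1},V_k]\vec{x}_k$ by counting newly discovered vertices. First I would observe that at each step $k$, the support $\supp(\vec{x}_{k+1})$ consists of precisely the vertices newly reached at level $k+1$; by the definition of $V_k$ and Claim~\ref{clm:abfs_column}, these sets are disjoint across steps and their union, together with the source vertex, exhausts the connected component containing the source. Hence the total number of rows $i$ that ever become nonzero over the entire search is at most $n$ (or $n-1$ if we exclude the source, which is already accounted for in $\vec{x}_1$ and never reproduced since its row and column are masked after step $1$).

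Next I would invoke the ``one nonzero in each row'' hypothesis: when computing $x_{k+1}(i)=\bigoplus_{j\in V_k\cap\supp(\vec{x}_k)} x_k(j)\otimes A(i,j)$, correctness in the Boolean/arithmetic/tropical semiring only requires detecting whether the OR (sum, min) is nonzero, so it suffices to find a single nonzero $A(i,j)$ with $j$ in the active index set and stop — this is the short-circuiting already described in the text. Therefore each row $i$ that ever gets set to nonzero contributes exactly one multiplied nonzero of $A$ over the whole computation, charged to the step at which $i$ is first discovered. Since a row, once discovered, has its index removed from all subsequent $V_{k'}$ (so it is never revisited) and since the source row is masked after the first step, the rows that incur a multiplication are exactly the $n-1$ non-source vertices of the component, each charged once. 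Summing, the total count is at most $n-1$ nonzeros multiplied.

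The final step is to note that this argument never used symmetry of $A$: the masking of row $j$ and column $j$ after step $k$ works identically when the recurrence uses $A^T[V_{k+1},V_k]$, so the bound carries over verbatim to directed graphs, matching the claim in the surrounding discussion. I would also remark that this is tight — on a path graph one genuinely discovers a new vertex at every step, so $n-1$ multiplications are unavoidable — which justifies stating the bound as an exact $n-1$ rather than $O(n)$.

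The main obstacle I anticipate is making precise the accounting that each discovered row is charged exactly one multiplication and that no row is discovered twice. This rests on Claim~\ref{clm:abfs_column} (each column, hence each element, is multiplied at most once) together with the disjointness $V_k\cap\supp(\vec{x}_k)$ shrinking $V$ at every step; the subtlety is confirming that the short-circuit choice of ``one nonzero per row'' is compatible with that claim — i.e., that committing to a single entry per row never forces us to later re-multiply a column we have since masked. This follows because once row $i$ is discovered we mask it out entirely, so we never return to evaluate $x_{k'}(i)$ for $k'>k+1$; the one multiplication spent on row $i$ is final. Handling the degenerate cases (disconnected graph, so fewer than $n-1$ rows are ever discovered; isolated source) just weakens ``$n-1$'' to ``at most $n-1$,'' which I would flag but not belabor.
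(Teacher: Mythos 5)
Your proposal is correct and follows essentially the same route as the paper: a charging argument in which each non-source vertex is discovered exactly once (guaranteed by the row/column masking behind Claim~\ref{clm:abfs_column} and Claim~\ref{clm:abfs_transpose}) and, under the one-nonzero-per-row hypothesis, contributes exactly one multiplied entry of $A$, giving $n-1$ in total. Your additional remarks on tightness, the directed case, and the disconnected-graph caveat are consistent with the paper's surrounding discussion and do not change the argument.
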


\begin{proof}
  Claim~\ref{clm:abfs_transpose} establishes that the transpose of $A(i,j)$
  elements used in computing BFS by Theorem~\ref{thm:abfs} are not
  multiplied. Moreover, only one nonzero from a row in $A$ is needed by the
  computation. It follows from Claim~\ref{clm:abfs_column} that columns are not
  repeated, hence only unique $A(i,j)$ elements can produce the frontier
  vertices. Excluding the source vertex, this leads to a total of $n-1$ nonzeros
  in $A$ in the multiplication.
\end{proof}

We remark that short-circuiting, as proposed in Theorem~\ref{thm:abfs_ops}, can
also be applied to SpMmSpV-BFS methods, meaning the evaluation of the semiring
to produce a new frontier nonzero stops when the result is not zero. But this
still leads to more algebraic operations than our approach.

\begin{observation}
  \label{obs:shortcircuit}
  An optimal SpMmSpV-BFS using one nonzero in each row of $A$ saves between $0$
  and $2(m-n)$ algebraic operations, and therefore in total requires between
  $2n$ and $2m$ algebraic operations.
\end{observation}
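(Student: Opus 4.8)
The plan is to bracket the operation count between the two extreme behaviours of short-circuiting, using as the anchor the fact established just above: an optimal SpMmSpV-BFS without short-circuiting multiplies every nonzero of $A$ exactly once, i.e.\ it performs $2m$ semiring multiplications, because it masks only the vector and therefore processes each of the $n$ columns of $A$ exactly once. So the statement is equivalent to: short-circuiting removes between $0$ and $2(m-n)$ of these $2m$ multiplications.

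The upper bound on the total (equivalently, the lower bound $0$ on the savings) is immediate: short-circuiting is a pure pruning of the evaluation of each output entry $x_{k+1}(i)=\bigoplus_j x_k(j)\otimes A(i,j)$; the set of products it actually evaluates is a subset of the set the non-short-circuited method evaluates, so the count never exceeds $2m$.

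For the lower bound on the total I would isolate the multiplications that short-circuiting can never elide. Fix any BFS tree of $G$ and split the $m$ edges into the $n-1$ tree edges and the $m-(n-1)$ non-tree edges, each edge contributing two nonzeros $A(i,j),A(j,i)$. I would show that, for every non-source vertex $v$, (i) the product that first makes $x(v)$ nonzero — the nonzero in row $v$ coming from the tree edge to $v$'s parent — is always evaluated, since $v$ must be discovered by \emph{some} incoming nonzero, and (ii) because SpMmSpV leaves the rows of $A$ unmasked (in contrast to the submatrix method, where Claim~\ref{clm:abfs_transpose} removes exactly these transpose occurrences), when $v$'s column is later processed a nonzero in row $\mathrm{parent}(v)$ is also evaluated, even though that product is subsequently discarded by the vector mask. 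These two incidences per non-source vertex give $2(n-1)$ forced multiplications, which the statement records as $2n$; the complementary $2(m-n)$ nonzeros all belong to non-tree edges, and I would exhibit an index ordering under which every one of them is short-circuited away (and, at the other extreme, a path-like graph for which none is), yielding the matching savings bound $2(m-n)$ and the value $0$.

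The main obstacle is the lower bound — in particular, making precise that the two ``tree'' incidences of each vertex cannot all be short-circuited simultaneously, and dually that every ``non-tree'' incidence can. This needs bookkeeping on the order in which the columns of a single BFS level are processed within a step and on which output rows already carry a nonzero at that point in the scatter; the clean lemma to prove is that within a step only the second-and-later incidences of a given row are elidable, together with the scheduling fact that these elisions can always be routed onto non-tree nonzeros. I expect this to be routine once phrased in terms of the layered neighbourhoods $N(L_\ell)\supseteq L_{\ell+1}$, and I would invoke Claim~\ref{clm:abfs_transpose} only for the contrast: the submatrix method additionally masks the rows and so avoids the $n-1$ forced transpose occurrences that keep SpMmSpV at $2n$ rather than $n-1$.
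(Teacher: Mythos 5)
Your overall route is the same as the paper's --- identify roughly two semiring evaluations per vertex that short-circuiting cannot remove (the product that discovers $v$, plus one more evaluation in an already-visited row at the following step), charge the possible savings to the remaining nonzeros, and use a path as the zero-savings extreme --- but your specific accounting of the second forced evaluation has a genuine gap. You claim that for every non-source $v$ the transpose tree nonzero $A(\mathrm{parent}(v),v)$ must be evaluated when $v$'s column is scattered. That fails as soon as a parent $p$ has two or more children discovered in the same step: within that step the short-circuit stops evaluating products into row $p$ once the first child's contribution makes the output entry nonzero, so only one of the sibling transposes $A(p,c_1),A(p,c_2),\dots$ is forced and the others are precisely the kind of operation short-circuiting elides. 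Consequently your set of $2(n-1)$ ``unavoidable'' multiplications is not unavoidable (for a star rooted at the source the forced count is about $n$, not $2(n-1)$), and the companion claim that only non-tree nonzeros are elidable fails for the same reason. The paper avoids tying the second forced evaluation to a particular edge: it counts per row, arguing that in row $v$ at most $d(v)-2$ evaluations can be saved, because one evaluation produces $v$'s own new frontier nonzero and at least one more occurs when row $v$ reappears as a previous-frontier parent in the step after $v$'s column is used; summing $d(v)-2$ over the vertices gives the $2(m-n)$ savings bound directly, without ever asserting that a specific tree transpose is evaluated.

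Two smaller points. First, the complement of your $2(n-1)$ tree incidences has size $2(m-n+1)$, not $2(m-n)$, so even granting your step (ii) the constants would not match the statement (you note the $2(n-1)$ versus $2n$ discrepancy but not this one). Second, the constructions you propose for attaining the extremes --- an index ordering under which every non-tree nonzero is elided, and a graph with no savings --- are more than the statement requires: it only asserts a range, and the paper establishes it by the per-row count together with the one-line remark that a path admits no savings.
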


\begin{proof}
  If a column vector is projected by the matrix-vector multiplication at step
  $k$ then its transpose will appear in the linear combination at step
  $k+1$. Short-circuiting the semiring evaluation on this transpose, now a row
  vector, at step $k+1$ will evaluate just one nonzero. But not all nonzeros
  from the column vector at step $k$ will be multiplied at step $k+1$ because
  some will have been previous frontier nonzeros. In the worst case, such as a
  path, there are no savings from short-circuiting. At best the short-circuiting
  can save $d(v)-2$ operations in each row because at least one nonzero must be
  a parent from a previous frontier and another evaluates to a new frontier
  nonzero, leading to $\sum_{v\in V}(d(v)-2) = 2(m-n)$ fewer evaluations. Hence
  short-circuiting saves between $0$ and $2(m-n)$ algebraic operations as
  claimed. Then the total number of algebraic operations after short-circuiting
  is between $2m-(2(m-n))= 2n$ and $2m$. Therefore an optimal SpMmSpV-BFS with
  short-circuiting requires between $2n$ and $2m$ algebraic operations.
\end{proof}

We see from Observation~\ref{obs:shortcircuit} that an optimal SpMmSpV-BFS with
short-circuiting requires at best $2n$ algebraic operations in the most
optimistic scenario, and in the worst case no semiring evaluations are
short-circuited and hence it multiplies all $2m$ nonzeros. In contrast, our
approach takes at most $n-1$ nonzeros as asserted by Theorem~\ref{thm:abfs_ops}.

Let us briefly explore our submatrix multiplication on directed graphs. Recall
the conventional linear algebra BFS is given by the recurrence $\vec{x}_{k+1} =
A\vec{x}_k$, with the implicit assumption that the underlying graph is
undirected. With a directed graph the adjacency matrix $A$ is not symmetric. By
convention, egress and ingress edges are indicated as nonzeros in the rows and
columns of $A$, respectively. Since BFS traverses the egress edges of frontier
vertices, we use $\vec{x}_{k+1} = A^T\vec{x}_k$. Now observe that the
outer-product for the matrix-vector multiplication proceeds as usual by
projecting the column vectors from $A^T$ that correspond to nonzeros in
$\vec{x}_k$. Since our method does not mask out the column vectors for nonzeros
in $\vec{x}_k$ until the next step, then it works on both undirected and
directed graphs.

\section{\label{sec:algorithm}Optimal algorithm}
We now give our main algorithm for sparse graphs that employs the new method in
Theorem~\ref{thm:abfs}. Our new Algorithm~\ref{alg:abfs_spmv} does not specify a
sparse matrix format to be as general as possible.

\begin{algorithm}[H]
\caption{\label{alg:abfs_spmv}}
\begin{algorithmic}[1]
\Require $A$ \Comment adjacency matrix in sparse representation
\Statex Initialize $V_1$ with $1,2,..n$
\Statex Initialize $\vec{x}_1$ with the source
\Statex Set $V_2 := V_1 \setminus \supp(\vec{x}_1)$
\For{$k=1,2,\ldots$ until end of component}
  \ForAll{$j \in \supp(\vec{x}_k)$}
    \State set $V^{\prime} := V_{k+1}$
    \ForAll{$i \in \supp(A[V^{\prime},j])$}
      \State $x_{k+1}(i) \gets x_{k+1}(i) \oplus x_k(j) \otimes A(i,j)$
      \State\label{line:mask} set $V^{\prime} := V^{\prime}\setminus \{i\}$
      \Comment Achieves $V_{k+2} = V_{k+1} \setminus \supp(\vec{x}_{k+1})$
    \EndFor
  \EndFor
\EndFor
\end{algorithmic}
\end{algorithm}

\begin{theorem}
  \label{thm:abfs_spmv}
  Algorithm~\ref{alg:abfs_spmv} computes an algebraic Breadth-First Search in
  $O(m+n)$ time for sparse $G$.
\end{theorem}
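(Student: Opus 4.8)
The plan is to verify that Algorithm~\ref{alg:abfs_spmv} faithfully implements the recurrence $\vec{x}_{k+1}=A[V_{k+1},V_k]\vec{x}_k$ of Theorem~\ref{thm:abfs}, and then to bound its running time by a careful accounting of the work in the three nested loops together with the one-time setup cost. Correctness is essentially inherited: by Theorem~\ref{thm:abfs} the submatrix recurrence computes a BFS, so it suffices to check that the algorithm's bookkeeping of $V_{k+1}$ matches the Definition, i.e.\ that $V_{k+1}=V_k\setminus\supp(\vec{x}_k)$. I would argue this by induction on $k$: the outer loop's body, ranging $j$ over $\supp(\vec{x}_k)$ and, inside the innermost loop, removing each $i\in\supp(A[V',j])$ from $V'$ as soon as $x_{k+1}(i)$ becomes nonzero, is exactly the running construction of $V_{k+2}=V_{k+1}\setminus\supp(\vec{x}_{k+1})$ annotated on line~\ref{line:mask}; the initialization lines establish the base case $V_1=\{1,\dots,n\}$, $V_2=V_1\setminus\supp(\vec{x}_1)$. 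I would also note that using $V'$ (reset to $V_{k+1}$ at the start of each $j$-iteration) rather than shrinking across different $j$'s within step $k$ is what keeps the algorithm equal to $A[V_{k+1},V_k]\vec{x}_k$ rather than to the symmetric $A[V_k,V_k]$ variant — both are valid by the remark after Claim~\ref{clm:abfs_column}, but fixing the semantics matters for the count.

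For the time bound, the key observation is that each index $j$ enters $\supp(\vec{x}_k)$ for exactly one value of $k$ over the whole search (a vertex is discovered on exactly one level), so over all iterations the middle loop runs $O(n)$ times in total. For a fixed such $j$, the innermost loop iterates over $\supp(A[V',j])$, which is a subset of $\supp(A_{*,j})$, hence costs $O(d(j))$ algebraic operations plus the $O(1)$ mask update per nonzero; summing over all $j$ once each gives $\sum_{j} d(j) = O(m)$. Thus the loop body over the entire run is $O(m+n)$. I would then add the cost of the setup: initializing $V_1$ with $1,\dots,n$, initializing $\vec{x}_1$, and forming $V_2$ are each $O(n)$. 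This yields the claimed $O(m+n)$.

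The main obstacle I anticipate is the implicit "$\supp(A[V',j])$" step: extracting the nonzeros of column $j$ that still lie in the live set $V'$, and doing the membership test and the deletion $V'\setminus\{i\}$, must each be $O(1)$ amortized, or the claimed bound fails. A straightforward resolution is to keep $V_{k+1}$ (and the working copy $V'$) as an $n$-bit characteristic array — or, better for parallel purposes, a doubly linked "live list" — so that testing $i\in V'$ and deleting $i$ are $O(1)$; scanning $\supp(A_{*,j})$ and skipping the masked entries is then $O(d(j))$. One should also confirm that resetting $V':=V_{k+1}$ at the top of each $j$-iteration is not itself $\Theta(n)$ each time (which would give $\Theta(n^2)$): since the algorithm only needs to \emph{read} $V_{k+1}$ during step $k$ and only \emph{writes} the deletions into $V_{k+2}$, one can maintain $V_{k+1}$ read-only for the duration of step $k$ and accumulate the removals into $V_{k+2}$ in place, so "$V':=V_{k+1}$" is a conceptual alias, not a copy. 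With that clarification the accounting is routine, and Section~\ref{sec:sequential}'s CSR instantiation supplies the concrete data structures; here I would simply state that such $O(1)$-per-operation structures exist and defer the concrete realization. Finally, I would remark that since BFS touches only the vertices reachable from the source, "until end of component" makes the bound $O(m+n)$ with $m,n$ counting that component, which only strengthens the claim.
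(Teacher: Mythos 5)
Your proposal is correct and reaches the stated $O(m+n)$ bound by essentially the same accounting as the paper: correctness is inherited from Theorem~\ref{thm:abfs}, $O(n)$ work for frontier bookkeeping, $O(m)$ for scanning nonzeros with $O(1)$ membership tests (an $n$-sized marker array), and $O(1)$ per algebraic operation. The one real difference is in how the algebraic operations are counted. The paper treats the update on line~\ref{line:mask} as a \emph{persistent} removal of $i$ from $V_{k+1}$ within the step (``this ensures only one nonzero in a row is multiplied''), and then invokes Theorem~\ref{thm:abfs_ops} to conclude only $O(n)$ algebraic operations are performed, with the $O(m)$ term coming solely from masked-entry tests; you instead keep $V_{k+1}$ effectively static within a step and bound the operations by $\sum_j d(j)=O(m)$. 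Your route is simpler and suffices for the theorem as stated, but it forfeits the sharper $n-1$ operation count that is the point of the within-step mask update and of the paper's emphasis throughout. Relatedly, your side remark that resetting $V'$ at each $j$ is what distinguishes $A[V_{k+1},V_k]\vec{x}_k$ from the symmetric $A[V_k,V_k]\vec{x}_k$ variant is off: that distinction is about whether the rows indexed by the \emph{current} frontier $\supp(\vec{x}_k)$ are masked, which is already handled by initializing $V'$ from $V_{k+1}$ rather than $V_k$; whether removals persist across different $j$ within a step instead governs the short-circuiting (one nonzero per row) behind Theorem~\ref{thm:abfs_ops}. This misreading does not affect the validity of your $O(m+n)$ bound, and your closing points about $O(1)$ membership/deletion structures and avoiding a $\Theta(n)$ copy of $V_{k+1}$ per $j$ match the paper's implementation remarks.
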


\begin{proof}
  The algorithm computes Breadth-First Search by Theorem~\ref{thm:abfs}. At each
  step $k$ only the nonzero indices in $\vec{x}_k$ and $A$ are involved in the
  $A[V_{k+1},V_k]\vec{x}$ submatrix multiplication. Each $j$ corresponding to
  nonzeros in $\vec{x}_k$ projects a $A[V_{k+1},j]$ masked column vector in the
  multiplication. The subscripts $i$ corresponding to nonzeros in $A[V_{k+1},j]$
  produce the next $\vec{x}_{k+1}$ frontier nonzeros. After operating on an
  $A(i,j)$ element during step $k$, the $i^{th}$ row in $A$ is masked by
  removing $i$ from $V_{k+1}$ (line \ref{line:mask}). This ensures only one
  nonzero in a row is multiplied and at the end of the step the $V_{k+2}$ has
  been realized. Therefore any $A(i,j)$ multiplied at a step $k$ will not be
  used in subsequent calculations, and by Claim~\ref{clm:abfs_transpose} its
  transpose will not be multiplied. Following Theorem~\ref{thm:abfs_ops} there
  are $O(n)$ algebraic operations.

  The total time is as follows. There are $O(n)$ entries in total added to all
  $\vec{x}_k$, taking $O(n)$ time. This is possible by storing each new $i$ from
  the inner loop in a separate array that can be iterated over in each
  step. There are $O(m)$ nonzeros in $A$ and thus $O(m)$ tests to determine the
  $i$ in all $V_{k+1}$ that are ultimately masked. This takes $O(m)$ time,
  possibly by storing each $i$ in another array of size $n$ so in $O(1)$ time it
  can be determined if a $i$ has been previously used, and then marked as
  visited. Each algebraic operation takes $O(1)$ time so all algebraic
  operations takes $O(n)$ time. Therefore in total it takes $O(m+n)$ time.
\end{proof}

Algorithm~\ref{alg:abfs_spmv} is also work-optimal on a PRAM under the
Work-Depth (WD) model~\cite{bib:vishkin1982, bib:jaja1992, bib:blelloch1996}. In
this model a \textbf{for all} construct denotes a parallel region in which
instructions are performed concurrently. All other statements outside this
construct are sequential. The scheduling of processors is handled implicitly and
an arbitrary number of simultaneous operations can be performed each step. The
work is the sum of actual operations performed and the number of processors is
not a parameter in the WD model. The depth is the longest chain of dependencies,
often the number of computation steps, and is denoted by $D$. The
work-complexity, $W$, denotes the total number of operations. A parallel
algorithm in WD can be simulated on a $p$-processor PRAM in $O(W/p + D)$ time.
Using $p=W/D$ processors achieves optimal work on the PRAM.

\begin{theorem}
  \label{thm:abfs_pram}
  Algorithm~\ref{alg:abfs_spmv} computes an algebraic Breadth-First Search over
  $t$ steps in $O(t)$ time and $O(m)$ work using $O(m/t)$ PRAM processors for
  sparse $G$.
\end{theorem}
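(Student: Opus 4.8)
The plan is to reuse the correctness already established for Algorithm~\ref{alg:abfs_spmv} and then account for its cost in the Work--Depth model, finishing with the standard simulation bound $O(W/p + D)$ for a $p$-processor PRAM. First I would identify the depth. The only sequential control structure is the outer loop over $k$, which runs for $t$ steps (one per level of the BFS tree), so every dependency chain passes through at most $t$ of these iterations. Inside a fixed step $k$, the two nested \textbf{for all} constructs, over $j \in \supp(\vec{x}_k)$ and over $i \in \supp(A[V',j])$, are parallel regions, so their bodies contribute only $O(1)$ to the depth. Hence $D = O(t)$.

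Next I would bound the work. Summed over all steps, the algorithm performs one semiring multiplication $x_k(j)\otimes A(i,j)$ per nonzero $A(i,j)$ it touches; by Lemma~\ref{lem:abfs_nnz} this is at most $m$ such nonzeros (and only $n-1$ with short-circuiting by Theorem~\ref{thm:abfs_ops}, which is a fortiori $O(m)$). The bookkeeping---testing membership of each candidate $i$ in $V_{k+1}$ and updating the mask---is charged once per examined nonzero, hence $O(m)$, and writing the $O(n)$ resulting frontier entries adds $O(n)$, which is $O(m)$ on the explored component of a sparse $G$ (as in Theorem~\ref{thm:abfs_spmv}). Therefore $W = O(m)$. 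Applying the WD-to-PRAM simulation, a $p$-processor PRAM executes the algorithm in $O(W/p + D)$ time; setting $p = \Theta(m/t)$ gives $O\!\big(m/(m/t) + t\big) = O(t)$ time, and the total operations performed---$p$ times the running time---is $O(m)$, i.e.\ work-optimal.

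The main obstacle is the sequential update on line~\ref{line:mask}: as written, deleting $i$ from $V'$ after each multiplication serializes the inner \textbf{for all} and would inflate the depth. I would resolve this by noting that for the parallel version we may drop the in-step short-circuit and evaluate the full masked column $A[V_{k+1},j]$ concurrently, deferring the mask update to the between-step transition $V_{k+2} = V_{k+1}\setminus\supp(\vec{x}_{k+1})$; correctness is then exactly Theorem~\ref{thm:abfs} (the $A[V_{k+1},V_k]$ recurrence, whose mask changes only between steps) rather than Theorem~\ref{thm:abfs_ops}, and the work bound degrades only from $n-1$ to the $m$ of Lemma~\ref{lem:abfs_nnz}, which leaves the claim intact. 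A secondary point to handle is that distinct $j$'s may write to the same $x_{k+1}(i)$; since semiring addition is associative and commutative, these concurrent updates combine correctly under the arbitrary-concurrency WD model in $O(1)$ depth, so the depth remains $O(t)$ and the stated bounds follow.
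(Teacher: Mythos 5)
Your proof is correct and follows essentially the same route as the paper: depth $O(t)$ from one $O(1)$-depth parallel region per BFS level, work $O(m)$ via Lemma~\ref{lem:abfs_nnz} plus per-nonzero bookkeeping, and the standard Work--Depth/Brent simulation with $p = O(m/t)$ processors. Your explicit deferral of the line~\ref{line:mask} mask update to the between-step transition (accepting the $m$ bound of Lemma~\ref{lem:abfs_nnz} rather than the $n-1$ of Theorem~\ref{thm:abfs_ops}) is a slightly more careful treatment of the in-step serialization than the paper, which instead lets the writing processor update $V_{k+1}$ concurrently with arbitrary write resolution, but both arguments land on the same $O(m)$ work accounting and the same conclusion.
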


\begin{proof}
  The algorithm computes Breadth-First Search by Theorem~\ref{thm:abfs}. A
  processor is assigned to a nonzero in the sparse matrix representation, so
  there are $O(m)$ processors for the edges. At each step $k$ every processor
  reads its $x_k(j)$ value and performs the algebraic operations for the matrix
  product and writes out the new $x_{k+1}(i)$ value, specifically the processor
  performs,

  \begin{displaymath}
    x_{k+1}(i) \gets x_{k+1}(i) \oplus x_{k}(j) \otimes A(i,j).
  \end{displaymath}

  A processor writing the $x_{k+1}(i)$ value also updates $V_{k+1}$ so $i$ can
  be masked out of subsequent calculations. Hence only the submatrix
  $A[V_{k+1},V_k]$ is used in the computation at each step $k$. Observe that
  concurrent writes to the same $i$ index in $x_{k+1}$ do not change the result
  and so arbitrary write resolution suffices. The same holds for updating $i$ in
  $V_{k+1}$.

  Under the Work-Depth (WD) model there are as many processors as needed to
  compute a single step $k$ in $O(1)$ time. Here the processors perform just the
  necessary calculations to produce the next submatrix each step, hence the work
  is the sum of actual operations that are performed over all the steps. Each
  algebraic operation takes $O(1)$ time, then from Theorem~\ref{thm:abfs_ops} it
  takes $O(n)$ time in total. It follows from Lemma~\ref{lem:abfs_nnz} that at
  most $m$ nonzeros are referenced. Although only the first nonzero of a row is
  needed, each processor must evaluate if its current nonzero element is in
  $V_{k+1}$. The work is then $O(m)$ and it follows from Brent's Theorem that it
  takes $O(m/p + t)$ total time and $O(m+pt)$ work on a PRAM. Therefore it takes
  $O(t)$ time and $O(m)$ work using $O(m/t)$ processors on a PRAM.
\end{proof}

Our Algorithm~\ref{alg:abfs_spmv} is an optimal algebraic BFS algorithm for
sparse graphs that is deterministic and does not depend on ordering or lack of
ordering in the matrix and vector indices. Although this algorithm fills a gap
in the study of BFS, it offers no theoretical advantage over the simple
combinatorial algorithm. However, we believe it offers a practical advantage
over a theoretically optimal SpMmSpV-BFS because it requires $O(n)$ instead of
$O(m)$ algebraic operations. Our technique of hiding or masking portions of $A$
and the input vector could benefit new algebraic graph libraries that already
feature masked sparse linear algebra operations~\cite{bib:davis2019,
  bib:yang2019, bib:yang2018, bib:buluc_graphblas2017}, specifically for
SpMmSpV. Since our method uses progressively smaller submatrices of $A$ each
step, it could prove useful in lowering the communication cost in distributed
computing methods that bitmap and distribute partitions of the adjacency
matrix~\cite{bib:ueno2017}. Next we'll demonstrate how to easily integrate our
method in the popular Compressed Sparse Row (CSR) format which is used in many
sparse matrix libraries~\cite{bib:sparse++, bib:oski, bib:intelmkl}.

\section{\label{sec:sequential}Practical optimal sequential algorithm}
The CSR format is a well-known sparse matrix representation that utilizes three
arrays, $nz$, $col$, and $row$, to identify the nonzero elements. The $nz$ array
holds the nonzero values in row-major order. The $col$ array contains the column
indices for nonzeros in the same row-major order of $nz$, and $row$ is an array
of $col$ indices for the first nonzero in each row of the matrix. The last value
in $row$ must be one more than the last $col$ index. A matrix-vector
multiplication in CSR iterates over the gap between successive values of the
$row$ array to access each nonzero in a row. Recall we can use $A^T$ in the
matrix-vector multiplications since it works for to both directed and undirected
graphs. So we give the basic CSR matrix-vector multiplication for $\vec{y} \gets
A^T\vec{x}$ in Listing~\ref{lst:csr_matvec}.

\setcounter{algorithm}{0}
\begin{algorithm}[H]
\floatname{algorithm}{Listing}
\caption{\label{lst:csr_matvec}Transpose Matrix-vector multiplication in CSR}
\begin{algorithmic}
\Require $nz$,$col$,$row$ \Comment CSR data structures
\Require $\vec{x},\vec{y}$ \Comment input and output vectors
\For{$i=1..n$}
  \For{$j = row[i] .. row[i+1]-1$}
    \State $y(col[j]) \gets y(col[j]) \oplus nz[j] \otimes x(i)$
  \EndFor
\EndFor
\end{algorithmic}
\end{algorithm}

Our Algorithm~\ref{alg:abfs_csr} requires a simple modification to this CSR
matrix-vector multiplication. We add an array, $T$, to store nonzero indices for
vertices that have been visited. This is used to limit the multiplication over
the appropriate submatrix of $A$ every step. At each step we iterate over only
nonzeros in $\vec{x}$, the indices of which are stored in another array $L$. We
could have used a sparse vector representation for $\vec{x}$ each step, but for
simplicity we just increment a pointer in $L$.

\begin{algorithm}[H]
\caption{\label{alg:abfs_csr}}
\begin{algorithmic}[1]
\Require $nz$,$col$,$row$ \Comment CSR data structures
\Require $\vec{x},\vec{y}$ \Comment input and output vectors
\Require $T,L$ \Comment arrays of size n
\Statex Initialize $\vec{x}$ and $L$ with source vertex
\State set $start := 0$ and $end := 1$ and $z := end$
\For{$k=1,2,\ldots$ until end of component}
  \For{$j=L[start] .. L[end]$}
    \For{$i=row[j] .. row[j+1]-1$}
      \If{$T[col[i]]$ is $0$}
        \State $y(col[i]) \gets y(col[i]) \oplus nz[i] \otimes x(j)$
        \State set $L[z] := col[i]$ and $T[col[i]] := 1$
        \State set $z := z + 1$
      \EndIf
    \EndFor
    \State set $T[j] := 1$ and $x(j) := 0$
  \EndFor
  \State set $start := end$ and $end := z$
  \State exchange pointers between $\vec{x}$ and $\vec{y}$
\EndFor
\end{algorithmic}
\end{algorithm}

Each column index $col[i]$ from $row[j]$ up to $row[j+1]$ is tested if it refers
to a vertex that has already been visited. If not, the CSR matrix-vector product
is performed. The $col[i]$ are added to $L$ and marked as visited in $T$. Each
new entry in the product vector $\vec{y}$ must be nonzero because only nonzero
operands are used in the multiplication.

Observe that each $col[i]$ is immediately marked \emph{visited}. This
effectively updates $V_{k+1}$ and simultaneously ensures that only the first
nonzero in a row is used at step $k$, hence there will be only $n$ unique
entries in the array $L$. Then by Theorem~\ref{thm:abfs_ops} the total number of
algebraic operations is $2(n-1)$ rather than $2m$. If desired, all $col[i]$ in a
step can contribute to the output by simply adding a test if $T[i]$ is zero
before the inner loop over the $row$ array, then removing the update to
$T[col[i]]$ and allowing $L$ to take $O(m)$ values.

Since Algorithm~\ref{alg:abfs_csr} is based on Algorithm~\ref{alg:abfs_spmv} and
does not add any new operations, then Theorem~\ref{thm:abfs_csr} follows
immediately.

\begin{theorem}
  \label{thm:abfs_csr}
  Algorithm~\ref{alg:abfs_csr} computes an algebraic Breadth-First Search in
  $O(m+n)$ time for sparse $G$.
\end{theorem}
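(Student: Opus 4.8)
The plan is to argue that Algorithm~\ref{alg:abfs_csr} is nothing more than a faithful CSR-level realization of Algorithm~\ref{alg:abfs_spmv}, and therefore inherits its $O(m+n)$ running time from Theorem~\ref{thm:abfs_spmv} without any additional asymptotic cost. The proof will be short. First I would establish the correspondence between the data structures: the array $L$ with its sliding window $[start,end]$ plays the role of $\supp(\vec{x}_k)$ at step $k$, the array $T$ is the complement of the remaining-index set $V_{k+1}$ (i.e., $T[v]=1$ exactly when $v\notin V_{k+1}$), and the CSR traversal \texttt{for $i=row[j]..row[j+1]-1$} enumerates exactly the nonzeros of the row $A^T_{*,j}$, equivalently the column $A_{j,*}$, which is what the inner \textbf{for all} $i\in\supp(A[V',j])$ loop of Algorithm~\ref{alg:abfs_spmv} does. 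The guard \texttt{if $T[col[i]]$ is $0$} implements membership in $V'$, and the assignment $T[col[i]]:=1$ on line is precisely the mask update $V':=V'\setminus\{i\}$ that guarantees at most one nonzero per row is used; the final $T[j]:=1$ masks the row vector $A_{j,*}$ itself, realizing $V_{k+2}=V_{k+1}\setminus\supp(\vec{x}_{k+1})$ as in Algorithm~\ref{alg:abfs_spmv}.

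Next I would verify that no step of Algorithm~\ref{alg:abfs_csr} performs work not already accounted for in Algorithm~\ref{alg:abfs_spmv}. The pointer bookkeeping ($start:=end$, $end:=z$, $z:=z+1$) is $O(1)$ per frontier vertex and per discovered vertex, hence $O(n)$ total; the pointer exchange between $\vec{x}$ and $\vec{y}$ is $O(1)$ per step and there are $O(n)$ steps; the test \texttt{if $T[col[i]]$ is $0$} is $O(1)$ and is executed once per nonzero scanned, of which there are $O(m)$ overall since each column vector is scanned at most once (Claim~\ref{clm:abfs_column}). The zeroing $x(j):=0$ and flag setting $T[j]:=1$ are $O(1)$ per frontier vertex, hence $O(n)$ in total. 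Each semiring evaluation $y(col[i])\gets y(col[i])\oplus nz[i]\otimes x(j)$ is $O(1)$ and, by Theorem~\ref{thm:abfs_ops}, only $n-1$ of them actually fire (one nonzero per row, excluding the source), so the algebraic cost is $O(n)$. Summing, the running time is $O(m+n)$, matching Theorem~\ref{thm:abfs_spmv}.

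Finally I would confirm correctness of the output: each entry written to $\vec{y}$ is nonzero because only nonzero operands $nz[i]$ and $x(j)$ enter the product (over the Arithmetic, Boolean, or Tropical semiring the claim still holds, since a nonzero times a nonzero is nonzero in each), so $\supp(\vec{y})$ is exactly the set of newly discovered vertices at that step, which are precisely the entries appended to $L$ in positions $[end,z)$; thus the window for the next iteration is set correctly. Since Algorithm~\ref{alg:abfs_csr} thereby computes the same recurrence $\vec{x}_{k+1}=A[V_{k+1},V_k]\vec{x}_k$ as Algorithm~\ref{alg:abfs_spmv}, Theorem~\ref{thm:abfs} gives that it computes an algebraic BFS, and the time bound follows.

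The main obstacle, such as it is, is not the asymptotics but making the data-structure correspondence airtight: one must check that the sliding window $[start,end]$ over $L$ really does isolate $\supp(\vec{x}_k)$ at the start of step $k$ and does not spill into the not-yet-processed entries appended during the same step, and that updating $T[col[i]]$ inside the inner loop (rather than at the end of the step) does not prematurely mask a vertex that should still be reachable as a frontier member at the current level --- but this is exactly the short-circuiting behavior already justified in Algorithm~\ref{alg:abfs_spmv} via line~\ref{line:mask}, so no new argument is needed. I would therefore simply state that Algorithm~\ref{alg:abfs_csr} adds no operations beyond those of Algorithm~\ref{alg:abfs_spmv} and invoke Theorem~\ref{thm:abfs_spmv}.
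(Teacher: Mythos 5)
Your proposal is correct and takes essentially the same route as the paper, which proves Theorem~\ref{thm:abfs_csr} simply by observing that Algorithm~\ref{alg:abfs_csr} is a CSR realization of Algorithm~\ref{alg:abfs_spmv} that adds no new operations and then invoking Theorem~\ref{thm:abfs_spmv}. Your write-up merely makes the data-structure correspondence ($L\leftrightarrow\supp(\vec{x}_k)$, $T\leftrightarrow$ complement of $V_{k+1}$, the guard on $T[col[i]]$ as the mask update of line~\ref{line:mask}) and the operation counting explicit, which the paper leaves implicit.
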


It isn't difficult to see that Algorithm~\ref{alg:abfs_csr} is very similar to
the combinatorial algorithm. Algorithm~\ref{alg:abfs_csr} is algebraic in the
sense that it solves BFS by the matrix equations in Theorem~\ref{thm:abfs} using
any appropriate semiring for the addition and multiplication operations. Our
intent here is to demonstrate that existing sparse matrix methods require only a
simple adaptation to achieve the optimality of the combinatorial algorithm while
maintaining their practical advantages. Our main algorithmic result is given in
Algorithm~\ref{alg:abfs_spmv} since it captures the general concept given in
Theorem~\ref{thm:abfs} and therefore is amenable to many forms of sparse matrix
and sparse vector implementations.

\section{\label{sec:parallel}Practical work-optimal parallel algorithm}
We give parallel, work-optimal CSR algorithm in Algorithm~\ref{alg:abfs_pcsr}
that is both simple and practical. Aside from the use of \textbf{for all} over
the $L$ and $row$ arrays, a significant difference between this algorithm and
Algorithm~\ref{alg:abfs_csr} is that we have to avoid adding duplicates to the
$L$ array.

\begin{algorithm}[H]
\caption{\label{alg:abfs_pcsr}}
\begin{algorithmic}[1]
\Require $nz$,$col$,$row$ \Comment CSR data structures
\Require $\vec{x},\vec{y}$ \Comment input and output vectors
\Require $T,L,P$ \Comment arrays of size n
\Statex Initialize $\vec{x}$ and $L$ with source vertex
\State set $start := 0$ and $end := 1$ and $z := end$
\For{$k=1,2,\ldots$ until end of component}
  \ForAll{$j=L[start] .. L[end]$}
    \ForAll{$i=row[j] .. row[j+1]-1$}
      \If{$T[col[i]]$ is $0$}
        \State $y(col[i]) \gets y(col[i]) \oplus nz[i] \otimes x(j)$
        \label{alg:abfs_pcsr_write}
        \State set $P[col[i]] := i$
        \If{$P[col[i]]$ is $i$}
          \State set $L[z] := col[i]$ and $T[col[i]] := 1$
          \State set $z := z + 1$
        \EndIf
      \EndIf
    \EndFor
    \State set $T[j] := 1$ and $x(j) := 0$
  \EndFor
  \State set $start := end$ and $end := z$
  \State exchange pointers between $\vec{x}$ and $\vec{y}$
\EndFor
\end{algorithmic}
\end{algorithm}

We use the PRAM model here so we can focus on the main aspects of the algorithm
without adding the complexity of how physical machines and parallel programming
manage write conflicts. Later, we'll comment on how we handle synchronization
for our practical implementation of Algorithm~\ref{alg:abfs_pcsr}.

In a PRAM all processors execute instructions simultaneously and can
concurrently access any memory cell from a global pool of memory. During the
execution of Algorithm~\ref{alg:abfs_pcsr}, two processors at level $k$ of the
BFS tree can find the same new vertex $u=col[i]$ at level $k+1$. Concurrent
write to $y(col[i])$ at line~\ref{alg:abfs_pcsr_write} does not pose a problem
because the same value is written by any processor. But we have to avoid adding
$u$ to $L$ more than once. This can be handled with a new parent array $P$ where
processors concurrently write to $P[u]$ their frontier vertex $v$ that reached
$u$. It doesn't matter which processor wins the write to $P[u]$ so any write
resolution protocol suffices. In the next clock cycle, each processor reads
$P[u]$ and only the processor with a matching parent vertex for $u$ can add $u$
to $L$, hence avoiding duplicates.

Algorithm~\ref{alg:abfs_pcsr} has the same work and depth as
Algorithm~\ref{alg:abfs_spmv} so Theorem~\ref{thm:abfs_pcsr} follows.

\begin{theorem}
  \label{thm:abfs_pcsr}
  Algorithm~\ref{alg:abfs_pcsr} computes an algebraic Breadth-First Search in
  $t$ steps and $O(m)$ work using $O(m/t)$ PRAM processors for sparse $G$.
\end{theorem}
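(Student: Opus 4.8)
The plan is to argue that Algorithm~\ref{alg:abfs_pcsr} is just Algorithm~\ref{alg:abfs_spmv} instantiated in the CSR format with one extra bookkeeping device (the parent array $P$), so that it inherits the correctness guarantee of Theorem~\ref{thm:abfs} and the work/depth bounds of Theorem~\ref{thm:abfs_pram}. First I would establish correctness: the outer loop over $k$ is the step index, the \textbf{for all} over $j=L[start]\ldots L[end]$ ranges exactly over $\supp(\vec{x}_k)$ (since $L$ between those pointers holds the frontier vertices discovered in step $k$), and the \textbf{for all} over $i=row[j]\ldots row[j+1]-1$ with the guard $T[col[i]]=0$ ranges over the nonzeros of the masked column vector $A[V_{k+1},j]$ (in transpose form, consistent with the CSR listing in Listing~\ref{lst:csr_matvec}). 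The update on line~\ref{alg:abfs_pcsr_write} is precisely the semiring accumulation $x_{k+1}(i)\gets x_{k+1}(i)\oplus x_k(j)\otimes A(i,j)$ from Theorem~\ref{thm:abfs}, and setting $T[col[i]]:=1$ together with $T[j]:=1$ realizes the mask update $V_{k+2}=V_{k+1}\setminus\supp(\vec{x}_{k+1})$ exactly as line~\ref{line:mask} of Algorithm~\ref{alg:abfs_spmv} does. Hence the algorithm computes the same recurrence and, by Theorem~\ref{thm:abfs}, performs BFS.

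Next I would handle the one genuinely new element: the $P$ array and the duplicate-avoidance logic, which is the main obstacle because it is where the parallel semantics differ from the sequential algorithm. The point is that in a PRAM step several processors may simultaneously discover the same new vertex $u=col[i]$; all of them execute the write to $y(u)$ (harmless, same value, arbitrary-CRCW resolution suffices) and all of them write their own index $i$ into $P[u]$. After that write resolves, exactly one value survives in $P[u]$, so when each processor re-reads $P[u]$ on the next cycle, the guard $P[col[i]]$ is $i$ succeeds for exactly one processor per new vertex $u$; that processor appends $u$ to $L$ and sets $T[u]:=1$. I would note that the two-phase write-then-read pattern is safe under the WD/PRAM model because the \textbf{for all} constructs impose a synchronization barrier between the conflicting write and the subsequent read, and that concurrent increments of $z$ must be serialized (fetch-and-add) so that each of the unique new vertices gets a distinct slot — this is $O(1)$ work per new vertex and does not change the asymptotics. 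The net effect is that $L$ accumulates exactly the distinct BFS vertices with no duplicates, matching the invariant used in Theorem~\ref{thm:abfs_pram}.

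Finally I would argue the resource bounds. Over all steps, each processor is bound to one nonzero of the CSR representation, so there are $O(m)$ processors; at each step a processor reads $x_k(j)$, tests $T[col[i]]$, does an $O(1)$ semiring operation, and possibly does the $O(1)$ $P$-write/$P$-read/append sequence, so the work per step summed over processors is proportional to the number of nonzeros actually examined, which by Lemma~\ref{lem:abfs_nnz} is at most $m$; hence total work $O(m)$. The depth is $O(1)$ per step because every processor finishes its constant-size body within a bounded number of synchronized phases, so the depth over $t$ steps is $O(t)$. Invoking the WD-to-PRAM simulation (Brent's theorem) with $p=O(m/t)$ processors gives $O(m/p+t)=O(t)$ time, which is the claim. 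Because Algorithm~\ref{alg:abfs_pcsr} adds no asymptotic work or depth over Algorithm~\ref{alg:abfs_spmv} and computes the same recurrence, Theorem~\ref{thm:abfs_pcsr} follows directly from Theorem~\ref{thm:abfs_pram}; the only new verification, already dispatched above, is that the $P$-array protocol correctly and cheaply deduplicates $L$.
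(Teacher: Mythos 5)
Your proposal is correct and follows essentially the same route as the paper, which proves Theorem~\ref{thm:abfs_pcsr} by observing that Algorithm~\ref{alg:abfs_pcsr} has the same work and depth as Algorithm~\ref{alg:abfs_spmv} (so Theorem~\ref{thm:abfs_pram} applies), with the parent-array $P$ write-then-read protocol under arbitrary-CRCW resolution handling duplicate insertions into $L$. You simply spell out in more detail what the paper states briefly in the surrounding text, including the correspondence of the CSR loops to the masked recurrence of Theorem~\ref{thm:abfs} and the accounting of the extra $O(1)$ deduplication work.
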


A practical implementation of Algorithm~\ref{alg:abfs_pcsr} will need to handle
read and write conflicts on a specific machine architecture and parallel
programming methodology. We created a multithreaded implementation of
Algorithm~\ref{alg:abfs_pcsr} using OpenMP. Atomics are used for reading and
updating the $T$ array. To avoid adding duplicates to $L$, a local array is kept
for each thread where it can store the new vertices discovered from its subset
of frontier vertices. Each thread iterates over these vertices in its local
array and atomically updates $T$ to remove any that had been discovered by other
threads. This eliminates all duplicates so only unique vertices can be added to
the next frontier. Since there can only be $O(m)$ duplicates overall then our
OpenMP implementation remains asymptotically work-optimal.

Each thread must update $L$ with its final subset of new frontier vertices. To
avoid synchronization on $L$, each thread is given a unique block of space in
$L$ to which it can concurrently add its frontier subset without write
conflicts. This can be achieved without wasting space as follows. Each thread
concurrently writes the count of its frontier vertices to a new global array
indexed by thread IDs. Since each thread has a unique ID then no synchronization
is needed. All threads wait until this array has values from every thread. Then
in parallel, each thread sums up the values in this global array from the first
thread ID to its own. This gives each thread a unique starting offset in $L$ to
write their data and there is no overlap or gaps, thereby achieving full
concurrency on $L$.

\section{\label{sec:experiment}Experiments}
We will show that Theorem~\ref{thm:abfs} leads to significant savings in
algebraic operations on real-world graphs. Then we'll compare the sequential and
parallel runtime performance of our approach with that of the GraphBLAS
library. We begin by comparing Algorithm~\ref{alg:abfs_csr} to simple CSR
implementations for BFS using dense vector (SpMV), sparse vector (SpMSpV), and
masked sparse vector (SpMmSpV). In each of the BFS implementations in our tests,
the visited vertices in a search are tracked in a similar manner as in
Algorithm~\ref{alg:abfs_csr}. In the next descriptions we denote these test
implementations as follows. Let SpMV-BFS denote the dense vector method,
SpMSpV-BFS for the sparse vector method, and finally SpMmSpV-BFS for a
``masked'' sparse vector method. In SpMV-BFS all nonzeros in $A$ are multiplied
each step. In SpMSpV-BFS we only count algebraic operations due to nonzeros in
the sparse vector. The SpMmSpV-BFS is nearly identical to
Algorithm~\ref{alg:abfs_csr} but tests visited frontier vertices after the
algebraic operations, and is therefore an asymptotically optimal ``masked''
sparse vector method.

Listed in Table~\ref{tbl:graphdata} are graphs from the Stanford Network
Analysis Project (SNAP)~\cite{bib:snapnets} used in the
experiments.\footnote{Diameters may differ from SNAP due to random sampling.} We
chose a source vertex for each graph such that BFS is run for the entire
reported diameter.\footnote{The number of BFS steps is one more than the
  diameter.} We count two algebraic operations for the
$(\oplus,\otimes)$-semiring operations in the inner loop of the CSR
multiplication. A comparison of the methods is illustrated by a log-scale
histogram in Figure~\ref{fig:op_histogram} where it is clear that
Algorithm~\ref{alg:abfs_csr} requires orders of magnitude fewer algebraic
operations than the other methods with the exception of the masked sparse vector
approach.

\begin{table}[t]
\caption{\label{tbl:graphdata} Test Graphs}
\centering
\begin{tabular}{lrrr}
  \toprule
  {} & n (vertices) & m (edges) & $D_G$ (diameter) \\
  \midrule
  roadNet-TX & 1,379,917 & 1,921,660 & 1057 \\
  roadNet-CA & 1,965,206 & 2,766,607 & 854 \\
  roadNet-PA & 1,088,092 & 1,541,898 & 787 \\ 
  com-Amazon & 334,863 & 925,872 & 44 \\
  com-Youtube & 1,134,890 & 2,987,624 & 20 \\
  com-LiveJournal & 3,997,962 & 34,681,189 & 17 \\
  com-Orkut & 3,072,441 & 117,185,083 & 9
\end{tabular}
\end{table}

\begin{figure}[t]
\centering
\resizebox{.55\textwidth}{.35\textwidth}{%
\input{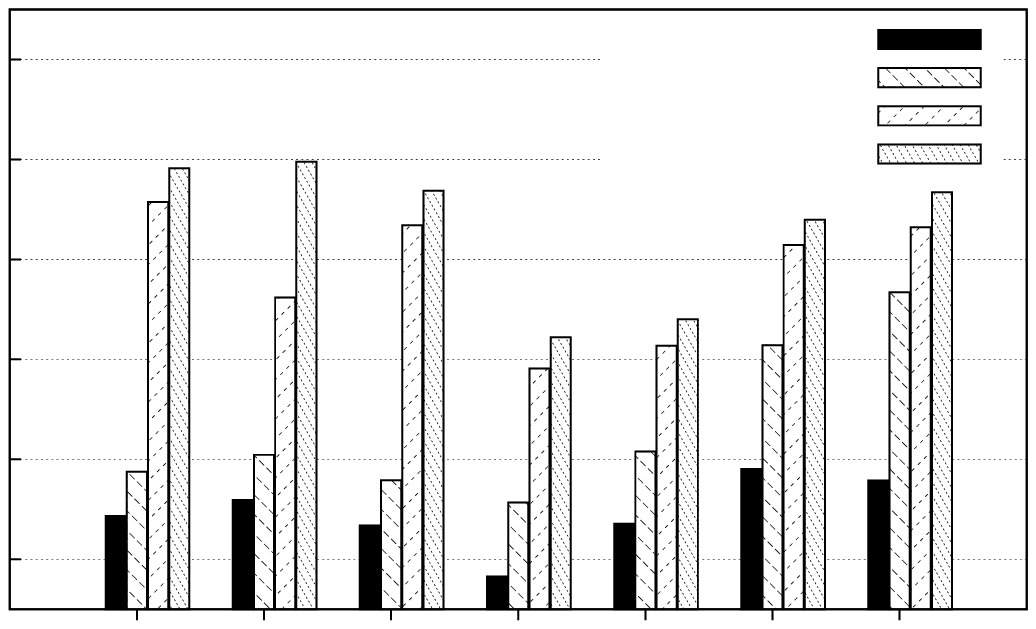}
}
\caption{\label{fig:op_histogram} Comparison of total algebraic operations.}
\end{figure}

The SpMmSpV-BFS should take $4m$ operations as we claimed in
Sections~\ref{sec:introduction} and~\ref{sec:submatrix}. At the worst end, the
SpMV-BFS multiplies all $2m$ nonzeros in $A$ every step leading to $4m(D_G+1)$
algebraic operations. In contrast, Algorithm~\ref{alg:abfs_csr} multiplies $n-1$
nonzeros as asserted by Theorem~\ref{thm:abfs_ops}, leading to $2(n-1)$
algebraic operations in total. All of this bears out in the experimental results
listed in Table~\ref{tbl:snap_ops}. A simple modification to
Algorithm~\ref{alg:abfs_csr} was described in Section~\ref{sec:sequential} that
would include all nonzeros in a row, then according to Lemma~\ref{lem:abfs_nnz}
our optimal algebraic BFS should take $2m$ algebraic operations on a sparse
graph. We tested this modification in the experiments and verified that it
indeed results in $2m$ algebraic operations on each of the graphs; therefore a
straightforward implementation at worst takes half the number of operations as
an optimal SpMmSpV-BFS.

\begin{table*}[t]
\caption{\label{tbl:snap_ops} Total Algebraic Operations}
\centering
\begin{tabular}{lrrrr}
  \toprule
  {} & Algorithm~\ref{alg:abfs_csr} & SpMmSpV-BFS & SpMSpV-BFS & SpMV-BFS \\
  \midrule
  roadNet-TX & 2,702,272 & 7,516,804 & 3,739,129,896 & 8,132,465,120 \\
  roadNet-CA & 3,914,052 & 11,041,552 & 4,134,833,458 & 9,461,795,940 \\
  roadNet-PA & 2,175,122 & 6,166,056 & 2,192,455,678 & 4,860,062,496 \\ 
  com-Amazon & 669,724 & 3,703,488 & 80,934,254 & 166,656,960 \\
  com-Youtube & 2,269,778 & 11,950,496 & 136,126,648 & 250,960,416 \\
  com-LiveJournal & 7,995,922 & 138,724,756 & 1,391,162,196 & 2,497,045,608 \\
  com-Orkut & 6,144,880 & 468,740,332 & 2,093,464,344 & 4,687,403,320
\end{tabular}
\end{table*}

We compared our algorithm implementations against the SuiteSparse GraphBLAS
library~\cite{bib:davis2019, bib:suitesparsegraphblas, bib:davis2018}. This is
considered a complete reference implementation and for convenience we will refer
to it simply as GraphBLAS. This library implements SpMmSpV-BFS using a masked
sparse vector~\cite[c.f. UserGuide pg.~191]{bib:suitesparsegraphblas}. We use
version 2.2.2 and 3.0.1 of GraphBLAS for the sequential and parallel tests,
respectively. The parallel BFS implementations of both our method and GraphBLAS
use OpenMP multithreading. The experiments were run on a single workstation with
over 256 GB of RAM and 28 Intel Xeon E5-2680 cores. Each BFS starts at vertex
$0$ so the absolute diameter is less than those reported in
Table~\ref{tbl:graphdata}. We use $T_{GB}$ to denote the runtime for GraphBLAS
and $T_{Alg}$ for the runtime of either Algorithm~\ref{alg:abfs_csr} or
Algorithm~\ref{alg:abfs_pcsr}. In the plots to follow the performance comparison
is given as the ratio $T_{GB}/T_{Alg}$ so if $T_{Alg}$ is faster then the ratio
is greater than one and will appear higher on the y-axis. In all experiments our
sequential and parallel implementations were faster than GraphBLAS, often by
over an order of magnitude.

A comparison of sequential runtime is given in
Figure~\ref{fig:serial_time_histogram}. Our BFS is 19-24x faster than GraphBLAS
on the road network graphs, and on average it is 22x faster. These graphs have
large diameter so inefficiencies are compounded with each step. The gains for
the lower diameter graphs are more modest but on average we are still more than
3.5x faster. In these graphs the work is more concentrated within each level of
the BFS rather than distributed over the total number of levels. Memory cache
effects could boost performance in these graphs more than the large diameter
graphs where there is considerably less work in each level of the BFS
tree. Although our approach requires signficantly fewer algebraic computations,
we suspect memory access is the bottleneck and thus accounts for lower than
expected improvement over GraphBLAS on the larger, low-diameter graphs.

We remark that our OpenMP implementation of Algorithm~\ref{alg:abfs_pcsr}
performs more algebraic operations than the sequential implementation of
Algorithm~\ref{alg:abfs_csr} because threads can concurrently produce the same
frontier vertices in the matrix-vector multiplication before these duplicates
are removed. This and the overhead of managing parallel and sequential regions
account for the gap in performance with respect to the experimental results for
the sequential algorithm. But since there are $O(m)$ duplicates overall, the
implementation remains asymptotically work-optimal. We remind the reader that
details on our implementation are at the end of Section~\ref{sec:parallel}.

The parallel runtime is given in Figure~\ref{fig:parallel_time} where it is also
evident that our method is considerably faster than GraphBLAS. In two-thirds of
the tests our algorithm completed in less than one-tenth of a second, and took
over one second for just one test where two threads were used on the largest
graph, com-Orkut, which took 1.59 seconds. At 64 threads on the two largest
graphs, com-Orkut and com-LiveJournal, our BFS completes in 0.153 and 0.082
seconds resulting in 11x and 17x faster time than GraphBLAS, respectively. For
com-Orkut we achieve over 1.5 GTEPS (Giga Traversed Edges Per Second). Also
notable on these large graphs is that our performance with respect to GraphBLAS
scales linearly on the whole. The exception is a drop-off with com-Amazon at 64
threads. We don't have an explanation for this drop but both our method and
GraphBLAS suffered it. For the large diameter graphs our performance peaks with
fewer threads, again because the per level work is very low so the overhead of
adding more threads becomes significant. Since we only have 28 physical cores,
running 64 virtual threads on these graphs with very few edges in each BFS level
may compound cache misses. But we are still about 9-12x faster than GraphBLAS at
our peak for these graphs. The raw wallclock timings are available in
Appendix~\ref{sec:timing}.

\begin{figure}[t]
\centering
\resizebox{.55\textwidth}{.35\textwidth}{%
\input{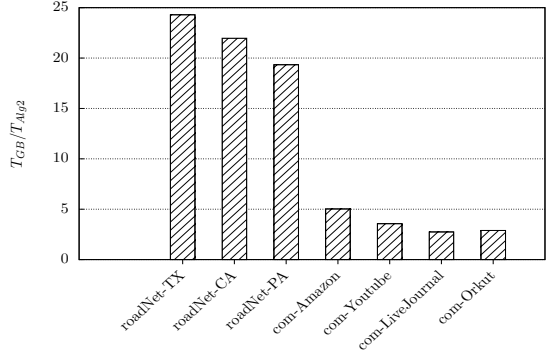}
}
\caption{\label{fig:serial_time_histogram} Comparison of sequential runtime.}
\end{figure}

\begin{figure*}[t]
\centering
\begin{subfigure}[t]{.45\textwidth}
\resizebox{1\textwidth}{.75\textwidth}{%
  \input{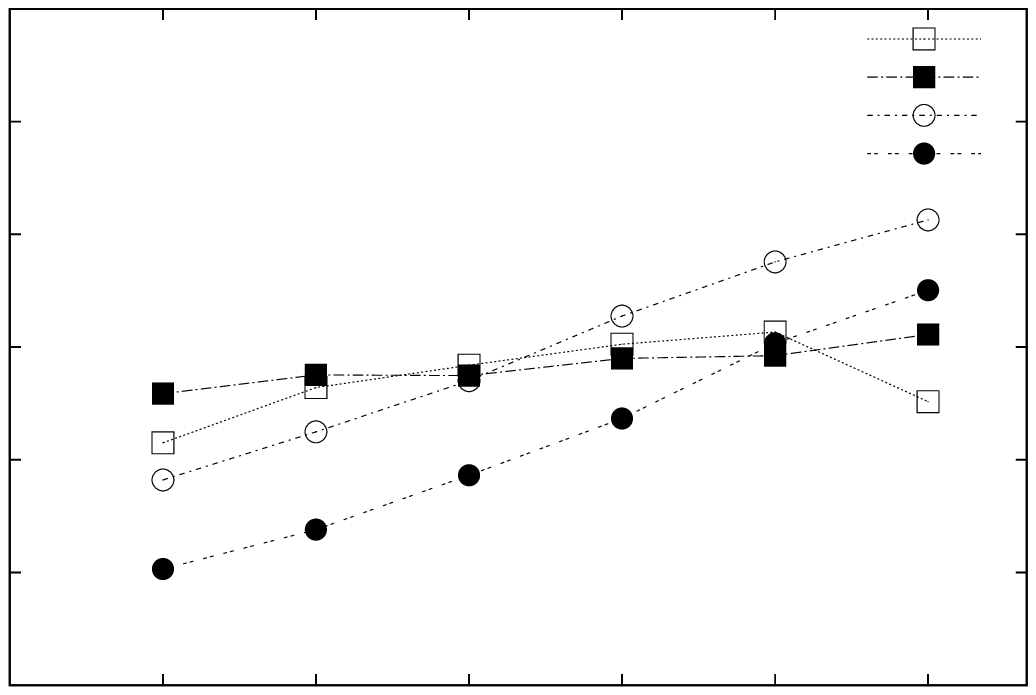}
}
\end{subfigure}
\begin{subfigure}[t]{.45\textwidth}
\resizebox{1\textwidth}{.75\textwidth}{%
  \input{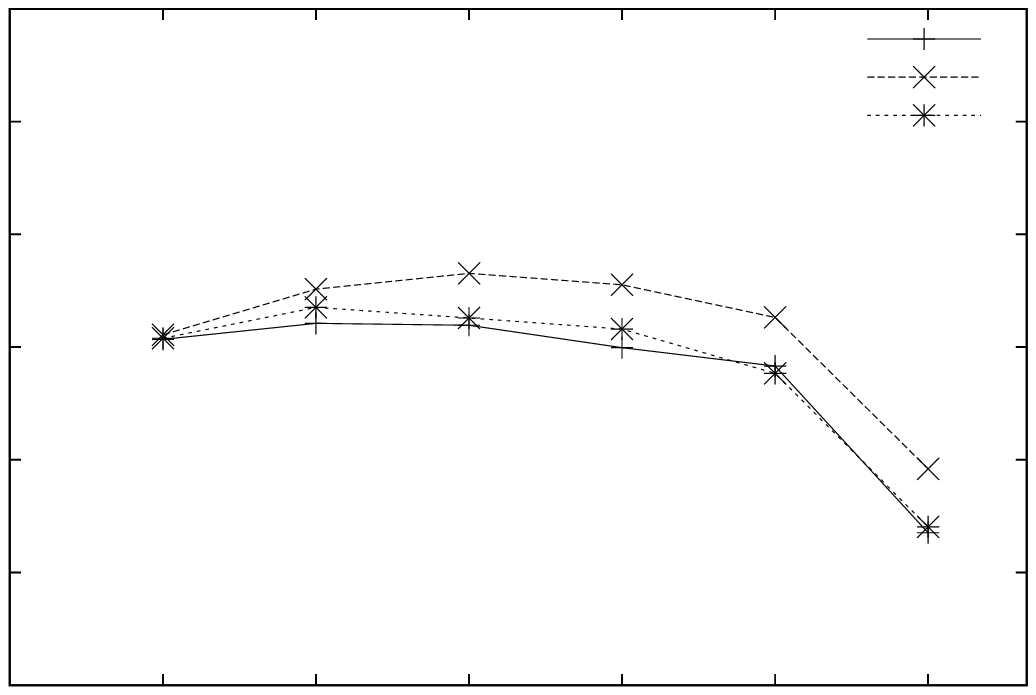}
}
\end{subfigure}
\caption{\label{fig:parallel_time} Comparison of parallel runtime.}
\end{figure*}

\section{\label{sec:conclusion}Conclusion}
We introduced a new algebraic formulation for Breadth-First Search that ensures
optimal work on sparse graphs by multiplying progressively smaller submatrices
of the adjacency matrix $A$. We show that BFS can be computed by the recurrence
$\vec{x}_{k+1}=A[V_{k+1},V_k]\vec{x}_k$, where masking row and column vectors in
$A$ corresponding to frontier nonzeros in each input vector will prevent
multiplying the same frontier nonzeros. Our submatrix multiplication approach
takes $O(n)$ instead of the $O(m)$ algebraic operations of a theoretically
optimal SpMmSpV-BFS on both undirected and directed graphs.

We gave a general algorithm using our submatrix formulation for BFS, showing it
is work-optimal for both sequential and parallel processing. We demonstrated it
is easily amended with CSR sparse-matrix multiplication and verified it leads to
significant performance improvement over the current state-of-the-art
SpMmSpV-BFS. We believe our approach can be easily integrated with other
existing matrix methods, and benefit those that support masking operations such
as the GraphBLAS.

Our paper closes a gap between the linear algebraic and graph-theoretic solution
for BFS. The methods we introduced may be useful in devising linear algebraic
formulations of other graph algorithms.

\section*{Acknowledgments}
The author is grateful to David G. Harris and Christopher H. Long for their
helpful comments. The author also thanks the anonymous reviewers for their
critical suggestions that improved the paper.

\bibliographystyle{abbrv}
\bibliography{optimal_algebraic_bfs}

\appendix
\section{\label{sec:lineartransform}Linear transformation}
We begin with the following definitions and claims.

\begin{definition}
  \label{def:submatrix}
  Let $A_k \in \{0,1\}^{n\times n}$ be a
  Boolean symmetric matrix such that $A_k(i,j)=1$ if $A(i,j)=1$ and $i,j \in
  V_k$, and is zero otherwise.
\end{definition}

  The size of $A_k$ does not change at each step, only the $A_{*,i},A_{i,*}$ are
  zeroed out for those $i$ not in $V_k$. Then $A_1$ is the adjacency matrix $A$.

\begin{definition}
  \label{def:selection_matrix}
  Let $S_k \in \{0,1\}^{n\times n}$ be a \emph{selection matrix} that is a
  Boolean diagonal matrix such that $S_k(i,j)=1$ if $i=j$ and $i,j \in V_k$, and
  is zero otherwise. Since it is diagonal with some elements being zero, $S_k$
  is therefore an idempotent matrix.
\end{definition}

A \emph{selection matrix} is a Boolean diagonal matrix that is used to mask or
zero out rows/columns of some other matrix. A selection matrix $S$ that is not
the Identity will have $\vec{0}$ for some row and column vectors. Then by the
usual rules of matrix multiplication, multiplying a matrix $A$ on the right by
$S$ will inherit the $\vec{0}$ column vectors in $S$, and multiplying $A$ on the
left by $S$ inherits the $\vec{0}$ rows in $S$. A symmetric selection on a
matrix $A$ is then given by $S A S$, which returns a new matrix with the same
dimensions of $A$ containing only the $A_{*,i},A_{i,*}$ corresponding to nonzero
$S_k(i,i)$ diagonal elements, and all other rows/columns are zeroed. For
example, if all diagonal elements in $S$ were one except for $S(2,2)$, then $S A
S$ returns $A$ with $A_{*,2}$ and $A_{2,*}$ as the zero vector.

\begin{claim}
  \label{clm:selection_identity}
  The equality $S_{k+1}S_k=S_{k+1}$ holds for $k\ge 1$.
\end{claim}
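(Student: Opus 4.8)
The plan is to prove $S_{k+1}S_k = S_{k+1}$ directly from Definition~\ref{def:selection_matrix}, by examining the $(i,j)$ entry of the product matrix. Since both $S_k$ and $S_{k+1}$ are diagonal, their product is diagonal as well, so it suffices to compare diagonal entries: $(S_{k+1}S_k)(i,i) = S_{k+1}(i,i)\cdot S_k(i,i)$ for each $i$, and I want to show this equals $S_{k+1}(i,i)$.

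The key observation is the nesting $V_{k+1}\subseteq V_k$, which follows from the Definition of $V_k$ right before Theorem~\ref{thm:abfs}, since $V_k = V_{k-1}\setminus\supp(\vec{x}_{k-1})$ means each $V_k$ is obtained from its predecessor by removing elements. Given this, I would argue by cases on whether $i\in V_{k+1}$. If $i\in V_{k+1}$, then also $i\in V_k$ by the nesting, so $S_{k+1}(i,i)=1$ and $S_k(i,i)=1$, giving $(S_{k+1}S_k)(i,i)=1=S_{k+1}(i,i)$. If $i\notin V_{k+1}$, then $S_{k+1}(i,i)=0$, so the product entry is $0\cdot S_k(i,i)=0=S_{k+1}(i,i)$ regardless of whether $i\in V_k$. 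Off-diagonal entries of the product vanish because each factor is diagonal. Hence $S_{k+1}S_k = S_{k+1}$ entrywise, and since $k\ge 1$ is arbitrary the claim holds.

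I do not anticipate any serious obstacle here; the only point requiring care is to state explicitly that $V_{k+1}\subseteq V_k$ (so that the reader sees why membership in $V_{k+1}$ forces membership in $V_k$), and to note that the product of two diagonal matrices is diagonal with entrywise-multiplied diagonal. Everything else is a routine case check. An equivalent and perhaps cleaner phrasing would be to say that $S_k$ acts as the identity on the coordinate subspace indexed by $V_k$ and as zero elsewhere; since $V_{k+1}\subseteq V_k$, applying $S_k$ after $S_{k+1}$ leaves the image of $S_{k+1}$ untouched, so $S_{k+1}S_k=S_{k+1}$. I would present the entrywise argument as the primary proof since it is the most self-contained given the definitions already in hand.
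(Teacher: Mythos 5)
Your proof is correct and follows essentially the same route as the paper: both arguments reduce to the nesting $V_{k+1}\subseteq V_k$ together with the diagonal structure of the selection matrices, with the paper phrasing it as left-multiplication by $S_{k+1}$ annihilating the rows of $S_k$ outside $V_{k+1}$ while you verify it entrywise. Your version is slightly more explicit in justifying the nesting from the definition of $V_k$, but there is no substantive difference.
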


\begin{proof}
  By Definition~\ref{def:selection_matrix} the nonzeros in $S_{k+1}$ must be in
  $S_k$ and so these rows are identical. Then multiplying $S_k$ on the left by
  $S_{k+1}$ annihilates the rows in $S_{k}$ indexed by $\vec{0}$ row vectors in
  $S_{k+1}$, hence the product $S_{k+1}S_k$ must return $S_{k+1}$.
\end{proof}

\begin{claim}
  \label{clm:submatrix_sequence}
  The equality $A_{k+1} = S_{k+1} A_k S_{k+1}$ holds for $k\ge 1$.
\end{claim}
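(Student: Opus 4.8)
The plan is to verify the identity entrywise, using only the definitions of $A_k$ and $S_k$ together with the monotonicity of the index sets. First I would record the elementary fact that $V_{k+1}\subseteq V_k$ for every $k\ge 1$: the defining relation $V_{k+1}=V_k\setminus\supp(\vec{x}_k)$ shows that the sets only shrink as $k$ grows.

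Next, because $S_{k+1}$ is a Boolean diagonal matrix with $S_{k+1}(i,i)=1$ exactly when $i\in V_{k+1}$ (Definition~\ref{def:selection_matrix}), ordinary matrix multiplication gives $(S_{k+1}A_kS_{k+1})(i,j)=S_{k+1}(i,i)\cdot A_k(i,j)\cdot S_{k+1}(j,j)$. Hence this entry equals $1$ precisely when $i\in V_{k+1}$, $j\in V_{k+1}$, and $A_k(i,j)=1$, and it is $0$ otherwise; since every matrix in sight is $\{0,1\}$-valued there is no issue of accumulating multiplicities in the product. Substituting the definition of $A_k$ (Definition~\ref{def:submatrix}), namely $A_k(i,j)=1$ iff $A(i,j)=1$ and $i,j\in V_k$, the entry $(S_{k+1}A_kS_{k+1})(i,j)$ equals $1$ iff $A(i,j)=1$ and $i,j\in V_{k+1}\cap V_k$. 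By the nesting $V_{k+1}\subseteq V_k$, we have $V_{k+1}\cap V_k=V_{k+1}$, so the condition reduces to $A(i,j)=1$ and $i,j\in V_{k+1}$, which is exactly the condition defining $A_{k+1}(i,j)=1$. Thus the two matrices agree in every entry and $A_{k+1}=S_{k+1}A_kS_{k+1}$.

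I expect no genuine obstacle; the one point that must be handled correctly is the direction of the inclusion ($V_{k+1}\subseteq V_k$, not the reverse), since it is precisely this nesting that lets the residual constraint ``$i,j\in V_k$'' carried by $A_k$ be absorbed into the stronger constraint ``$i,j\in V_{k+1}$'' imposed by the outer $S_{k+1}$ factors. As an alternative route one could first establish the closed form $A_k=S_kAS_k$ by a similar entrywise argument (with base case $S_1=I$ and $A_1=A$), and then deduce the claim from Claim~\ref{clm:selection_identity} together with $S_kS_{k+1}=S_{k+1}$, which is the transpose of $S_{k+1}S_k=S_{k+1}$ and holds because these diagonal matrices are symmetric. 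I would still prefer the self-contained entrywise verification above, since it does not presuppose any closed form for $A_k$.
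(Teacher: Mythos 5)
Your proof is correct, but it takes a different route from the paper. You verify the identity entrywise: since $S_{k+1}$ is diagonal, $(S_{k+1}A_kS_{k+1})(i,j)=S_{k+1}(i,i)\,A_k(i,j)\,S_{k+1}(j,j)$, and the residual constraint ``$i,j\in V_k$'' carried by $A_k$ is absorbed into ``$i,j\in V_{k+1}$'' via the nesting $V_{k+1}=V_k\setminus\supp(\vec{x}_k)\subseteq V_k$, so no induction is needed at all. The paper instead argues by induction on $k$: a base case observing that symmetric selection by $S_2$ annihilates the source vertex's row and column of $A_1$ to give $A[V_2,V_2]$, and an inductive step asserting that since all earlier frontiers are already masked in $A_k$, applying $S_{k+1}$ yields $A[V_{k+1},V_{k+1}]$. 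Your argument is the more elementary and more rigorous one --- it isolates the single fact (monotonicity of the $V_k$) that makes the claim true and checks every entry explicitly, whereas the paper's inductive phrasing is somewhat verbal and, strictly speaking, the induction hypothesis is never really used beyond restating the definition of $A_k$. What the paper's framing buys is stylistic parallelism with the subsequent closed-form claim $A_k=S_kAS_k$ (its Claim~\ref{clm:submatrix_identity}), which genuinely does proceed by induction using Claim~\ref{clm:selection_identity}; your remark about the alternative route via that closed form shows you see the same structure, and your preference for the self-contained entrywise check here is sound.
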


\begin{proof}
  We prove this by induction. In the base step $A_2 = S_2 A_1 S_2$ holds because
  $S_2(i,i)$ is zero for the source vertex $i$ and hence symmetric selection
  annihilates $A_{*,i}$ and $A_{i,*}$ to give $A[V_2,V_2]=A_2$ which satisfies
  Definition~\ref{def:submatrix}. Then $S_3 A_2 S_3$ gives $A[V_3,V_3]=A_3$ by
  the same definition.

  Now assume $A_k = S_k A_{k-1} S_k$ is true for all steps $1..k$. Since
  $S_{k+1}$ masks out the nonzeros from $\vec{x}_{k+1}$ and all previous
  $\vec{x}_k$ have already been masked in $A_k$, then $S_{k+1} A_k S_{k+1}$
  gives $A_{k+1}=A[V_{k+1},V_{k+1}]$.
\end{proof}

\begin{claim}
  \label{clm:submatrix_identity}
  The equality $A_{k}=S_{k} A S_{k}$ holds for $k\ge 1$.
\end{claim}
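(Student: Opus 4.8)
The plan is to establish $A_k = S_k A S_k$ by a one-line induction on $k$ built on Claims~\ref{clm:selection_identity} and~\ref{clm:submatrix_sequence}, both of which are already in hand. A direct entrywise check works equally well, and I sketch it below as a sanity alternative.

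For the base case $k=1$, note that $V_1 = \{1,2,\ldots,n\}$, so $S_1$ has every diagonal entry equal to $1$; that is, $S_1$ is the identity matrix, and hence $S_1 A S_1 = A = A_1$ by the remark following Definition~\ref{def:submatrix}. For the inductive step, suppose $A_k = S_k A S_k$. Claim~\ref{clm:submatrix_sequence} gives $A_{k+1} = S_{k+1} A_k S_{k+1}$, and substituting the hypothesis yields $A_{k+1} = S_{k+1} S_k\, A\, S_k S_{k+1}$. By Claim~\ref{clm:selection_identity}, $S_{k+1} S_k = S_{k+1}$; and since each $S_j$ is a diagonal (hence symmetric) matrix, transposing that identity gives $S_k S_{k+1} = (S_{k+1} S_k)^{T} = S_{k+1}^{T} = S_{k+1}$ as well. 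Collapsing both products leaves $A_{k+1} = S_{k+1} A S_{k+1}$, which closes the induction.

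As an alternative, one can argue directly: because $S_k$ is diagonal, the $(i,j)$ entry of $S_k A S_k$ is $S_k(i,i)\,A(i,j)\,S_k(j,j)$, which equals $A(i,j)$ when both $i \in V_k$ and $j \in V_k$ and is $0$ otherwise --- exactly the condition defining $A_k$ in Definition~\ref{def:submatrix}. There is essentially no obstacle here; the only two spots deserving a moment's attention are recognizing that $V_1 = [n]$ makes $S_1$ the identity (so the base case is genuine rather than vacuous) and using symmetry of the selection matrices to pass from $S_{k+1} S_k = S_{k+1}$ to the mirrored identity $S_k S_{k+1} = S_{k+1}$ on the right. This claim then serves as a building block for Proposition~\ref{prop:linear_transformation}, since it lets us rewrite each masked matrix $A_k$ directly in terms of the original adjacency matrix $A$.
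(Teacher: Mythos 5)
Your proof is correct and follows essentially the same route as the paper: induction on $k$ with $S_1$ the identity as the base case, then collapsing $S_{k+1}S_k A S_k S_{k+1}$ to $S_{k+1} A S_{k+1}$ via Claims~\ref{clm:selection_identity} and~\ref{clm:submatrix_sequence}. Your explicit justification of the right-hand collapse $S_k S_{k+1}=S_{k+1}$ by symmetry of the diagonal selection matrices (and the optional entrywise check) is a small detail the paper leaves implicit, but it is the same argument.
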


\begin{proof}
  We prove this by induction. In the base step the claim follows trivially for
  $A_1=S_1 A S_1$ since $S_1$ is the Identity. Now in the inductive step, assume
  $A_k = S_k A S_k$ holds. Then applying Claims~\ref{clm:selection_identity}
  and~\ref{clm:submatrix_sequence} gives $A_{k+1} = S_{k+1} A S_{k+1}$ as
  follows.

  \begin{align*}
    A_{k+1} &= S_{k+1} A_k S_{k+1}
    \tag{Claim~\ref{clm:submatrix_sequence}} \\
    &= S_{k+1} (S_k A S_k) S_{k+1}
    \tag{Induction} \\
    &= S_{k+1} A S_{k+1}
    \tag{Claim~\ref{clm:selection_identity}}
  \end{align*}
\end{proof}

\newtheorem*{prop:transform}{Proposition \ref{prop:linear_transformation}}
\begin{prop:transform}
There is a linear transformation on $\vec{y}_k = A\vec{y}_{k-1}$ that gives
$\vec{x}_{k+1}=A[V_k,V_k]\vec{x}_k$, specifically $\vec{x}_{k+1} = A_k\vec{y}_k$
and subsequently $\vec{x}_{k+1}=A_kA^{k-1}\vec{x}_1$.
\end{prop:transform}

\begin{proof}
  We first show that $S_k \vec{x}_k$ is equal to $S_k \vec{y}_k$. Here
  $\vec{y}_k$ and $\vec{x}_k$ contain nonzeros that have not been produced by
  previous steps. It follows from Theorem~\ref{thm:abfs} that $\vec{x}_k$
  contains only such nonzeros. Now $S_k \vec{y}_k$ annihilates the nonzeros in
  $\vec{y}_k$ that are not in $V_k$, hence $S_k \vec{y}_k$ is equal to
  $\vec{x}_k$. Since $S_k$ is idempotent then $S_k \vec{x}_k = S_k (S_k
  \vec{y}_k) = S_k \vec{y}_k$. Using this equality and
  Claim~\ref{clm:submatrix_identity} we can show that $A_k \vec{y}_k$ is equal
  to $A_k \vec{x}_k$.

  \begin{align*}
    A_k \vec{y}_k &= S_k A S_k \vec{y}_k
    \tag{Claim~\ref{clm:submatrix_identity}} \\
    &= S_k A S_k \vec{x}_k \\
    &= A_k \vec{x}_k \tag{Claim~\ref{clm:submatrix_identity}} \\
  \end{align*}

  This leads to $\vec{x}_{k+1} = A_k \vec{x}_k = A_k \vec{y}_k$. Iteration on
  $\vec{y}_k=A\vec{y}_{k-1}$ yields $\vec{y}_k=A^{k-1}\vec{y}_1$. Since the
  source vector is the same for this conventional recurrence and that of
  Theorem~\ref{thm:abfs}, then $\vec{x}_1=\vec{y}_1$, giving the result
  $\vec{x}_{k+1}=A_k\vec{y}_k$ and $\vec{x}_{k+1}=A_k A^{k-1} \vec{x}_1$ as
  claimed.
\end{proof}

We emphasize that the result of Proposition~\ref{prop:linear_transformation}
supposes that a chosen semiring is applied consistently. If the arithmetic
semiring was used to produce $\vec{x}_{k+1}$ then it must be used to compute
$A_kA^{k-1}\vec{x}_1$.

\section{\label{sec:timing}Parallel algorithm timings}
Tables~\ref{tbl:alg_timing} and~\ref{tbl:graphblas_timing} list the wallclock
timings in seconds for the parallel runtime experiments plotted in
Figure~\ref{fig:parallel_time} of Section~\ref{sec:experiment}.

\begin{table}[H]
\caption{\label{tbl:alg_timing} Algorithm~\ref{alg:abfs_pcsr} OpenMP
  wallclock(s)}
\centering
\footnotesize
\begin{tabular}{lrrrrrrr}
  \toprule
{} & \tiny{roadNet-TX} & \tiny{roadNet-CA} & \tiny{roadNet-PA} &
\tiny{com-Amazon} & \tiny{com-Youtube} &
\tiny{com-LiveJournal} & \tiny{com-Orkut} \\
\midrule
Threads \\
\midrule
2 & 0.078264 & 0.109803 & 0.062554 & 0.040858 & 0.079981 & 0.758120 & 1.590389\\
4 & 0.068324 & 0.076542 & 0.049612 & 0.023068 & 0.054668 & 0.427502 & 0.904479\\
8 & 0.065665 & 0.067798 & 0.054572 & 0.015790 & 0.035317 & 0.232289 & 0.498732\\
16 & 0.075797 & 0.071152 & 0.057704 & 0.011845 & 0.025740 & 0.132911 & 0.274236\\
32 & 0.085144 & 0.083455 & 0.071895 & 0.011375 & 0.023407 & 0.092609 & 0.182314\\
64 & 0.285308 & 0.234109 & 0.219250 & 0.023498 & 0.027105 & 0.081694 & 0.153125
\end{tabular}
\end{table}

\begin{table}[H]
\caption{\label{tbl:graphblas_timing} GraphBLAS OpenMP
  wallclock(s)}
\centering
\footnotesize
\begin{tabular}{lrrrrrrr}
  \toprule
{} & \tiny{roadNet-TX} & \tiny{roadNet-CA} & \tiny{roadNet-PA} &
\tiny{com-Amazon} & \tiny{com-Youtube} &
\tiny{com-LiveJournal} & \tiny{com-Orkut} \\
\midrule
Threads \\
\midrule
2 & 0.656726 & 0.947223 & 0.528064 & 0.181383 & 0.480276 & 2.678422 & 3.249154\\
4 & 0.632048 & 0.873912 & 0.506795 & 0.143786 & 0.368506 & 2.030834 & 2.353997\\
8 & 0.600930 & 0.853390 & 0.522071 & 0.112992 & 0.237032 & 1.511427 & 1.812888\\
16 & 0.604373 & 0.834515 & 0.515823 & 0.096466 & 0.191989 & 1.286389 & 1.413317\\
32 & 0.605746 & 0.800793 & 0.489050 & 0.099752 & 0.177613 & 1.250459 & 1.492713\\
64 & 0.728695 & 0.885137 & 0.580791 & 0.134360 & 0.234332 & 1.429025 & 1.737067
\end{tabular}
\end{table}
\end{document}